\begin{document}

\title{Four-Valued Semantics for Deductive Databases}

\titlerunning{Four-Valued Semantics for Deductive Databases}

%
%
\author{Dominique Laurent\and Nicolas Spyratos}

\institute{Dominique Laurent \at ETIS Laboratory - ENSEA, CY Cergy Paris University, CNRS\\F-95000 Cergy-Pontoise, France\\ \mailsa 
\and
Nicolas Spyratos \at LISN Laboratory - University Paris-Saclay, CNRS\\
F-91405 Orsay, France\\
\mailsb\\~\\
{\bf Acknowledgment:} Work conducted while the second author was visiting at FORTH Institute of Computer Science, Crete, Greece (https://www.ics.forth.gr/)}

\date{}
\maketitle

\begin{abstract}
In this paper, we introduce a novel approach to deductive databases meant to take into account  the needs of  current applications in the area of data integration. To this end, we extend the formalism of standard deductive databases  to the context of Four-valued logic so as to account for {\em unknown}, {\em inconsistent}, {\em true} or {\em false} information under the open world assumption. In our approach, a database  is a pair $(E,R)$ where $E$ is the extension and $R$ the set of rules. The extension is a set of pairs of the form $\langle \varphi, {\tt v}\rangle$ where $\varphi$ is  a fact and {\tt v} is a value that can be true, inconsistent or false - but not unknown (that is, unknown facts are not stored in the database). The rules follow the form of standard Datalog$^{neg}$ rules but, contrary to standard rules, their head may be a negative atom.
\\
Our main contributions are as follows: $(i)$ we give an expression of  first-degree entailment in terms of other connectors and exhibit a functionally complete set of basic connectors not involving  first-degree entailment, 
$(ii)$ we define a new operator for handling our new type of rules and show that this operator is monotonic and continuous, thus providing an effective way for defining and computing database semantics, and $(iii)$ we argue that our framework allows for the definition of a new type of updates that  can be used in most standard data integration applications.
\end{abstract}
\begin{keywords}
{Open World Assumption~.~Multi-valued logic~.\\Inconsistent database~.~Deductive database~.~Update Semantics}
\end{keywords}

\section{Introduction}
In this paper, we present a novel approach meant to take into account  the needs of many current applications, specifically in the domain of data integration. Our purpose is to extend the concept of deductive databases \cite{CeriGT90,Ullman} to the context of Four-valued logic \cite{Belnap}, a  formalism known to be suitable for data integration, as it allows  to deal with {\em unknown}, {\em inconsistent}, {\em true} or {\em false}  information. We begin by illustrating our approach through an example used as our running example throughout the paper.

\medskip\noindent{\bf Running Example.}
Our example concerns the storage of bags of rice grains, considering two important factors that (among others) influence the design and development of optimum storage, namely color and humidity of the rice grains \cite{Batay}. 

We assume that each bag is tested for the color and humidity of its rice grains in two different sites, first just before leaving the rice farm and then just before entering the warehouse. The outcomes of these tests can be: humid or not humid (with respect to a humidity threshold); and white or not white (with respect to a color threshold). Based on these outputs, the following actions are taken:
\begin{itemize}
\item If the grains are not humid and white then store the bags in the warehouse.
\item If the grains are humid then do not store the bags but cure the grains.
\item If the grains are not white then do not store the bags but analyze further.
\end{itemize}
We assume that the tests are conducted by sensors: two sensors at the rice farm, one for humidity, denoted $H_1$, and one for color denoted $W_1$; and two sensors at the warehouse denoted $H_2$ and $W_2$.  We also assume that, during a test, if the sensor is functioning then it returns a Boolean value (true or false), otherwise it returns no value. Under these assumptions, one of the following cases can appear for the sensors testing humidity (and similarly for the sensors testing color):
\begin{enumerate}
\item  The two sensors return the same value.
\item The two sensors return {\em different} values.
\item Only one of the two sensors returns a value.
\item Neither of the two sensors returns a value.
\end{enumerate}
In this setting, let $Humid(ID)$, denote the humidity state or `value' of a bag with identifier $ID$. Then the question is: what value should we assign to $Humid(ID)$ in each of the four cases above? In our formalism, we answer this question by `integrating' the outputs of $H_1$ and $H_2$ as follows (and similarly for the outputs of $W_1$ and $W_2$):
\begin{enumerate}
\item $Humid(ID)$ is set to the common value returned by the sensors.
\item $Humid(ID)$ is set to {\em inconsistent}, to mean that the sensors returned different values.
\item $Humid(ID)$ is set to the value returned by the sensor which returned a value.
\item $Humid(ID)$ is set to {\em unknown}, to mean that neither of the two sensors returned a value.
\end{enumerate}
As our example shows, we clearly need more than the standard truth values {\tt True} and {\tt False}, to express the cases 2 and 4 above. It will be seen that the Four-valued logic introduced in \cite{Belnap} provides the right formalism as it provides the additional truth values needed and also appropriate connectors to work with these additional truth values. For instance, using a connector denoted by $\oplus$ we can express all four cases above in a single expression: $Humid(ID)= H_1(ID) \oplus H_2(ID)$. 

The database is a pair $(E,R)$ where $E$ collects the sensor outputs and where $R$ is a set of rules describing how to integrate these outputs and how to treat the bags based on the integrated values. Formally, the elements of $E$ are pairs of the form $\langle \varphi, {\tt v}\rangle$ to represent the output of one sensor about a bag recognized by its identifier. In such pair $\varphi$ is a fact regarding the humidity or the color of a bag and ${\tt v}$ is its associated truth value. The rules expressing the integration of the sensor outputs and the conditions regarding the storage of the bags are as follows:

\smallskip
\begin{tabular}{ll}
$\rho_1: Humid(x) \leftarrow H_1(x) \oplus H_2(x)$&$\rho_5: Cure(x)  \leftarrow Humid(x)$\\
$\rho_2: White(x) \leftarrow W_1(x) \oplus W_2(x)$&$\rho_6: \neg Store(x)  \leftarrow \neg White(x)$\\
$\rho_3: Store(x)  \leftarrow \neg Humid,(x) \wedge White(x)$&$\rho_7: New\_test(x)  \leftarrow \neg White(x)$\\
$\rho_4: \neg Store(x)  \leftarrow Humid(x)$&
\end{tabular}

\smallskip\noindent
Although the rules above roughly look like standard Datalog rules with negation, the following basic differences have to be noticed:
\begin{enumerate}
\item
The body of a rule is not restricted to be a conjunction of literals; in fact we allow all available connectors to occur in the body of a rule.
\item
The head of a rule is not restricted to be an atom: negative literals are allowed, at the cost of generating contradictory facts.
\item
Contradictions are allowed in database semantics and treated as such, in the context of the Four-valued semantics introduced in \cite{Belnap}.
\end{enumerate}
To illustrate how our approach deals with such rules, we first give a rough overview of the basic notions used in our approach. First, in Four-valued logic, four truth values are considered, namely {\tt t}, {\tt b}, {\tt n} and {\tt f}, standing respectively for {\em true}, {\em inconsistent}, {\em unknown}{\footnote{The intuition explaining the notation {\tt b} and {\tt n} will be clarified later in this paper.}} and {\em false}.

In this context the pieces of information to be stored in the database extension are pairs of the form $\langle \varphi, {\tt v}\rangle$ where $\varphi$ is a fact (i.e. an atom with no variable) and ${\tt v}$ is one of the four truth values just mentioned. By such a pair, which we call {\em valuated pair} or v-pair for short, we mean that {\em `$\varphi$ has truth value ${\tt v}$'}. Moreover, we make the intuitively appealing convention that unknown facts are not stored, meaning that the database extension can {\em not} contain a v-pair of the form $\langle \varphi, {\tt n}\rangle$. We emphasize that, contrary to most database approaches in which only {\em true} pieces of information are stored, our approach allows to store {\em true}, {\em false} or even {\em inconsistent} pieces of information.

Continuing with our example, assume there are three rice bags with identifiers $101$, $202$ and $303$ for which the following sensor outputs and corresponding v-pairs are  stored in the database:
\begin{description}
\item{\em Regarding bag $101$:}
$H_1$ and $H_2$ both return  {\tt False}; this results in storing the two v-pairs $\langle H_1(101), {\tt f}\rangle$ and $\langle H_2(101), {\tt f}\rangle$  in the database extension.
$W_1$ returns {\tt True} but $W_2$ returns no value;  this results in storing the pair $\langle W_1(101), {\tt t}\rangle$ in the database extension.
\item{\em Regarding bag $202$:}
$H_2$ returns {\tt True} and $H_1$ returns no value; this results in storing the v-pair $\langle H_2(202), {\tt t}\rangle$ in the database extension.
$W_1$ returns {\tt False} while $W_2$ returns  {\tt true};  this results in storing the two pairs $\langle W_1(202), {\tt f}\rangle$ and $\langle W_2(202), {\tt t}\rangle$ in the database extension.
\item{\em Regarding bag $303$:}
$H_1$ and $H_2$ both return no value, $W_1$ returns {\tt False} and $W_2$ returns no value;  this results in storing the pair $\langle W_1(303), {\tt f}\rangle$ in the database extension.
\end{description}
Roughly speaking,  given a set $S$ of v-pairs, applying a rule $\rho$ is achieved as follows: for every instantiation of $\rho$ denoted $inst(\rho)$, the truth value of the body of $inst(\rho)$ is computed against $S$, and if this truth value is {\tt t} or {\tt b} then this truth value is assigned to the head of the $inst(\rho)$. Moreover, as more than one rule head may involve the same fact, in case of conflicting assignment, we apply the integration statements as done for the sensors. We illustrate this processing below.
\begin{enumerate}
\item
At the first step, the only rules that apply are $\rho_1$ and $\rho_2$.
\begin{itemize}
\item
Based on the v-pairs $\langle H_1(101), {\tt f}\rangle$ and $\langle H_2(101), {\tt f}\rangle$, $\rho_1$ generates the v-pair $\langle Humid(101), {\tt f}\rangle$ stating that the grains in bag $101$ are not humid.\\
As for identifier $202$, since the output of $H_1$ is missing, we consider the (non-stored) v-pair $\langle H_1(202), {\tt n}\rangle$, which combined by $\oplus$ with the stored v-pair $\langle H_2(202), {\tt t}\rangle$ generates $\langle Humid(202), {\tt t}\rangle$ stating that the grains in bag $202$ are  humid.\\
As for identifier $303$, since both $H_1$ and $H_2$ no value, $\rho_1$ generates no v-pair involving $Humid(303)$, meaning that the humidity of the grains in the bag $303$ is unknown. 
\item
As for $White(101)$, since $W_2$ returns no value, $\rho_2$ generates  the v-pair $\langle White(101), {\tt t}\rangle$ stating that the grains in bag $101$ are white.\\
As for $White(202)$, we notice that $W_1$ and $W_2$ disagree. In this case, $\rho_2$ generates  the v-pair $\langle White(202), {\tt b}\rangle$, meaning that the fact $White(202)$ is inconsistent, thus that the color of the grains in bag $202$ cannot be decided.\\
As for $White(303)$, since $W_2$ returns no value, $\rho_2$ generates the v-pair $\langle White(303), {\tt f}\rangle$, meaning that  the grains in bag $303$ cannot be considered white.
\end{itemize}
\item
The next step is based on the v-pairs earlier generated, namely: $\langle Humid(101), {\tt f}\rangle$, $\langle Humid(202), {\tt t}\rangle$, $\langle White(101), {\tt t}\rangle$, $\langle White(202), {\tt b}\rangle$ and $\langle White(303), {\tt f}\rangle$. The rules $\rho_3 \ldots \rho_7$ apply as follows:
\begin{itemize}
\item
Based on $\langle Humid(101), {\tt f}\rangle$ and $\langle White(101), {\tt t}\rangle$, $\rho_3$ generates the v-pair $\langle Store(101), {\tt t}\rangle$.
Considering $\langle Humid(202), {\tt t}\rangle$ and $\langle White(202), {\tt b}\rangle$, since the conjunction of the body is false, $\rho_3$ does not apply.
Since $Humid(303)$ is unknown and $White(303)$ is false, the conjunction of the body is false, entailing that $\rho_3$ does not apply.
\item
Since $Humid(101)$ is not true, $\rho_4$ does not apply.
Since $Humid(202)$ is true, $\rho_4$ generates $\langle Store(202), {\tt f}\rangle$.
Since $Humid(303)$ is unknown, $\rho_4$ does not apply.
\item
As above, since $Humid(101)$ is not true, $\rho_5$ does not apply, but $\rho_5$ generates  $\langle Cure(202), {\tt t}\rangle$ because $Humid(202)$ is true.
\item
Similarly, since $White(101)$ is not false, $\rho_6$ and $\rho_7$ do not apply.
Since $White(202)$ is {\em inconsistent}, $\rho_6$ and $\rho_7$ generate respectively $\langle Store(202), {\tt b}\rangle$ and $\langle New\_test(202), {\tt b}\rangle$.
Moreover, since $White(303)$ is false, $\rho_6$ and $\rho_7$ generate respectively $\langle Store(303), {\tt f}\rangle$ and $\langle New\_test(303), {\tt t}\rangle$.
\end{itemize}
After applying the rules, conflicting v-pairs involving $Store(202)$ appear, because $Store(202)$  has been found {\em false} by $\rho_4$ and {\em inconsistent} by $\rho_6$. In this case, we integrate these different truth values in much the same way as we did for the sensor outputs, stating that $Store(202)$ should be inconsistent. Therefore,  the v-pair $\langle Store(202), {\tt f}\rangle$ is removed from the result of this step.
\item
As no further v-pair can be generated by the rules based on the v-pairs generated in the previous steps, the processing stops and returns the set of all these v-pairs, which added to the database extension constitutes what we call the {\em database semantics}.
\end{enumerate}
The obtained database semantics is therefore the set of the following v-pairs:

\smallskip
$\langle H_1(101), {\tt f}\rangle$, $\langle H_2(101), {\tt f}\rangle$, $\langle H_2(202), {\tt t}\rangle$,

$\langle W_1(101), {\tt t}\rangle$, $\langle W_1(202), {\tt f}\rangle$, $\langle W_2(202), {\tt t}\rangle$, $\langle W_1(303), {\tt f}\rangle$,

$\langle Humid(101), {\tt f}\rangle$, $\langle Humid(202), {\tt t}\rangle$,

$\langle White(101), {\tt t}\rangle$, $\langle White(202), {\tt b}\rangle$, $\langle White(303), {\tt f}\rangle$,

$\langle Store(101), {\tt t}\rangle$, $\langle Store(202), {\tt b}\rangle$, $\langle Store(303), {\tt f}\rangle$,

$\langle Cure(202), {\tt t}\rangle$, $\langle New\_test(202), {\tt b}\rangle$, $\langle New\_test(303), {\tt t}\rangle$.

\smallskip\noindent
It is shown in this paper that the computation just described in an informal way is sound and its relationship with other related approaches is investigated. Moreover, some basic properties of the underlying Four-valued logic are stated, and among them this example raises the following question: could the rules $\rho_4$ and $\rho_6$ be replaced by the single rule $\rho_{46}: \neg Store(x) \leftarrow Humid(x) \vee \neg White(x)$? Whereas this question is answered positively in standard approaches to Datalog databases (\cite{CeriGT90,Ullman}) and in the Four-valued approach of \cite{Fitting91}, we argue that this replacement raises some issues.
\hfill$\Box$

\medskip\noindent
This work  is an extension of that in  \cite{Lau2019} where rule bodies are restricted to be conjunctions. The main contributions of this paper are as follows:
\begin{enumerate}
\item
We show that FDE ({\em First Degree Entailment}) implication, one of the standard implications in Four-valued logic, can be expressed in terms of the usual connectors.
\item
We exhibit a functionally complete set of basic connectors {\em not} involving FDE implication, contrary to the results in \cite{Arieli1998}.
\item
We generalize the rules by allowing negative literals in their heads and connectors other than negation, conjunction and disjunction in their bodies.
\item
We define a new immediate consequence operator for handling such rules, and we show that this operator is monotonic and continuous,  thus providing an effective way for defining and computing database semantics.
\item
We argue that our context allows for the definition of a new type of updates that  can be used in  data integration applications. Notice that to the best of our knowledge, the problem of database updating in a Four-valued logic framework has never been addressed in the literature.
\end{enumerate}
The paper is organized as follows: In Section~\ref{sec:background} we review the formalism related to Four-valued logic and we address the first two issues mentioned above. Section~\ref{sec:database-sem} is devoted to  the definitions of the syntax and the semantics of databases in the context of Four-valued logic. In Section~\ref{sec:updates}, we define two types of updates, one standard and another one related to data integration. Then, in Section~\ref{sec:rel-work} we review some of the approaches related to our work that can be found in the literature. 
Section~\ref{sec:conclusion} provides an overview of our approach and suggests research issues that we are currently investigating or that we intend to investigate in the next future.
\section{Background: Four-Valued Logic}\label{sec:background}
\subsection{Basics of Four-Valued Logic}
Four-valued logic was introduced by Belnap in \cite{Belnap}, who argued that this formalism could be of interest when integrating data  from various data sources. To this end,  denoting by {\tt t}, {\tt b}, {\tt n} and {\tt f} the four truth values, the usual connectives $\neg$, $\vee$ and $\wedge$ have been defined as shown in Figure~\ref{fig:truth-tables-con}. An important feature of this Four-valued logic is that it allows to compare truth values according to two partial orderings, known as {\em truth ordering} and {\em knowledge ordering}, respectively denoted by $\preceq_t$ and $\preceq_k$ and defined by:

\smallskip\centerline{
${\tt n}\preceq_k {\tt t}\preceq_k {\tt b}$~;~${\tt n}\preceq_k {\tt f} \preceq_k {\tt b}$ \qquad and \qquad 
${\tt f}\preceq_t {\tt n}\preceq_t {\tt t}$~;~${\tt f}\preceq_t {\tt b} \preceq_t {\tt t}$.}

\smallskip\noindent
To explain the choice of {\tt b} and {\tt n} as notation for {\em inconsistent} and {\em unknown}, let ${\cal V}=\{{\tt True},{\tt False}\}$ be the set of the usual  truth values. The four truth values in Four-valued logic can then be thought of as corresponding to the elements in the power set of ${\cal V}$, by associating respectively $\emptyset$, $\{{\tt False}\}$, $\{{\tt True}\}$, $\{{\tt True},\, {\tt False}\}$ with ${\tt n}$, ${\tt f}$, ${\tt t}$, ${\tt b}$. Then the notation ${\tt n}$ and  ${\tt b}$ can be read respectively as {\em none} and {\em both}. Notice also that, under this association, the ordering $\preceq_k$, the connectors $\oplus$ and $\otimes$ are respectively nothing but the restriction to the power set of ${\cal V}$ of set theoretic inclusion, union and intersection.

As in standard two-valued logic, conjunction (respectively disjunction) corresponds to minimum (respectively maximum) truth value, when considering the  truth ordering. It has also been shown in \cite{Belnap,Fitting91} that the set $\{{\tt t}, {\tt b}, {\tt n}, {\tt f}\}$ equipped with these two orderings has a distributive bi-lattice structure, where the minimum and maximum with respect to $\preceq_k$ are denoted by $\otimes$ and $\oplus$, respectively.

Not surprisingly, it should be emphasized that in this Four-valued logic some basic properties holding in standard logic do not hold. For example, Figure~\ref{fig:truth-tables-con} shows that formulas of the form $\Phi \vee \neg\Phi$ are not always true, independently from the truth value of $\Phi$. More importantly, it has been argued in \cite{Arieli1998,Hazen17,Tsoukias} that defining the implication  $\Phi_1 \Rightarrow \Phi_2$ by $\neg \Phi_1 \vee \Phi_2$, is problematic.

To see this,  we consider as in \cite{Belnap,Arieli1998,Hazen17,Tsoukias}, that {\tt t} and {\tt b} are the two {\em designated truth values}, because as mentioned above, these truth values are the only ones corresponding to sets containing {\tt True}. As a consequence, a formula $\Phi$ is  said to be {\em valid} if its truth value is designated, {\em i.e.,} either  {\tt t} or {\tt b}. 

As argued in \cite{Arieli1998,Hazen17,Tsoukias}, $\Rightarrow$ does not satisfy the deduction theorem, because the formula $\Phi$ defined by $(\Phi_1 \wedge (\Phi_1 \Rightarrow \Phi_2)) \Rightarrow \Phi_2$ is {\em not valid  for every truth value assignment}. Indeed based on Figure~\ref{fig:truth-tables-imp}, for every assignment $v$ such that  $v(\Phi_1)={\tt n}$ and $v(\Phi_2)={\tt f}$, we have $v(\Phi_1 \Rightarrow \Phi_2)={\tt n}$ and thus, $v(\Phi)={\tt n}$. As a consequence, we discard $\Rightarrow$ as the implication providing semantics to our rules.


\smallskip
Among the various implications  introduced in the literature, {\em First Degree Entailment} implication, or FDE implication,  denoted hereafter by $\to$ (\cite{Arieli1998,Hazen17}) is the most popular. We also mention another implication introduced in \cite{Tsoukias} and  denoted hereafter by $\hookrightarrow$. Each of these implications is associated with another implication, denoted by $\stackrel{*}{\to}$ and $\stackrel{*}{\hookrightarrow}$ whose role is explained next. The truth tables of all these implications are shown in Figure~\ref{fig:truth-tables-imp}.

Recall from  \cite{Arieli1998} (Corollary 9) that $\to$, is defined `from scratch' in the sense that it cannot be expressed using the other standard connectives $\neg$, $\vee$ and $\wedge$. As we shall see shortly we can provide an expression of $\to$ involving standard connectors in the formalism of \cite{Tsoukias}. It is also important to notice that as shown in \cite{Tsoukias}, $\Phi_1 \hookrightarrow \Phi_2$ is defined by $\sim \Phi_1 \vee \Phi_2$, where $\sim$ is a complement operator whose truth table is shown in Figure~\ref{fig:truth-tables-con}.

Moreover, since $\Phi_1 \to \Phi_2$ and $\neg\Phi_2 \to \neg\Phi_1$ are not equivalent, the implication $\Phi_1 \stackrel{*}{\to} \Phi_2$ is introduced in  \cite{Arieli1998,Hazen17} as a shorthand for $(\Phi_1 \to \Phi_2) \wedge (\neg\Phi_2 \to \neg\Phi_1)$. As a similar situation holds regarding $\hookrightarrow$,
$\Phi_1 \stackrel{*}{\hookrightarrow} \Phi_2$ is defined in \cite{Tsoukias} as  $(\Phi_1 \hookrightarrow \Phi_2) \wedge (\neg\Phi_2 \hookrightarrow \neg\Phi_1)$. 

In an attempt to compare these implications, we notice that, contrary to $\Rightarrow$, the formula $\Phi$ defined by $(\Phi_1 \wedge (\Phi_1 \leadsto \Phi_2)) \leadsto \Phi_2$ is valid when replacing $\leadsto$ with one of the implications $\to$, $\hookrightarrow$, $\stackrel{*}{\to}$ or $\stackrel{*}{\hookrightarrow}$.
%
%
%
%
It is also interesting to see that when merging the truth values {\tt t} and {\tt b} (respectively {\tt f} and {\tt n}) into a single value, say {\tt TRUE} (respectively {\tt FALSE}), the corresponding truth tables of $\to$ and $\hookrightarrow$ are that of the standard implication, while this is not the case for $\Rightarrow$, $\stackrel{*}{\to}$ and $\stackrel{*}{\hookrightarrow}$. This explains why we discard these three implications. However, the choice between $\to$ and $\hookrightarrow$ is not easy for the following reasons:
\begin{itemize}
\item
In \cite{Arieli1998,Hazen17}, it is argued that, similarly to two-valued implication, $\to$ satisfies the property that $v(\Phi_1 \to \Phi_2) = v(\Phi_2)$ whenever $v(\Phi_1)$ is designated. However, $\to$ does not satisfy the properties of $\hookrightarrow$ given below.
\item
Although  $\hookrightarrow$ does not satisfy the above property, it is argued  in \cite{Tsoukias} that, similarly to two-valued implication, $\hookrightarrow$ satisfies the property that $v(\Phi_1) \preceq_t v(\Phi_2)$ if and only if $v(\Phi_1 \hookrightarrow \Phi_2) = {\tt t}$. 
\end{itemize}
We draw attention on that none of these two implications satisfies all intuitively appealing properties that standard two-valued implication satisfies, among which contraposition is an  example.
%

%
\begin{figure}[t]
\begin{center}
{\footnotesize
\begin{tabular}{c|c}
\,$\varphi$\,&\,$\neg \varphi$\,\\
\hline
${\tt t}$&${\tt f}$\\
${\tt b}$&${\tt b}$\\
${\tt n}$&${\tt n}$\\
${\tt f}$&${\tt t}$\\
\end{tabular}
\qquad
\begin{tabular}{c|c}
\,$\varphi$\,&\,$\not\sim \varphi$\,\\
\hline
{\tt t}&{\tt b}\\
{\tt b}&{\tt t}\\
{\tt n}&{\tt f}\\
{\tt f}&{\tt n}\\
\end{tabular}
\qquad
\begin{tabular}{c|c}
\,$\varphi$\,&\,$\sim \varphi$\,\\
\hline
{\tt t}&{\tt f}\\
{\tt b}&{\tt n}\\
{\tt n}&{\tt b}\\
{\tt f}&{\tt t}\\
\end{tabular}

\vspace{.5cm}
\begin{tabular}{c|cccc}
\,$\vee$\,&{\tt t}&{\tt b}&{\tt n}&{\tt f}\\
\hline
{\tt t}&{\tt t}&{\tt t}&{\tt t}&{\tt t}\\
{\tt b}&{\tt t}&{\tt b}&{\tt t}&{\tt b}\\
{\tt n}&{\tt t}&{\tt t}&{\tt n}&{\tt n}\\
{\tt f}&{\tt t}&{\tt b}&{\tt n}&{\tt f}\\
\end{tabular}
\qquad
\begin{tabular}{c|cccc}
\,$\wedge$\,&{\tt t}&{\tt b}&{\tt n}&{\tt f}\\
\hline
{\tt t}&{\tt t}&{\tt b}&{\tt n}&{\tt f}\\
{\tt b}&{\tt b}&{\tt b}&{\tt f}&{\tt f}\\
{\tt n}&{\tt n}&{\tt f}&{\tt n}&{\tt f}\\
{\tt f}&{\tt f}&{\tt f}&{\tt f}&{\tt f}\\
\end{tabular}

\vspace{.5cm}
\begin{tabular}{c|cccc}
\,$\oplus$\,&{\tt t}&{\tt b}&{\tt n}&{\tt f}\\
\hline
{\tt t}&{\tt t}&{\tt b}&{\tt t}&{\tt b}\\
{\tt b}&{\tt b}&{\tt b}&{\tt b}&{\tt b}\\
{\tt n}&{\tt t}&{\tt b}&{\tt n}&{\tt f}\\
{\tt f}&{\tt b}&{\tt b}&{\tt f}&{\tt f}\\
\end{tabular}
\qquad
\begin{tabular}{c|cccc}
\,$\otimes$\,&{\tt t}&{\tt b}&{\tt n}&{\tt f}\\
\hline
{\tt t}&{\tt t}&{\tt t}&{\tt n}&{\tt n}\\
{\tt b}&{\tt t}&{\tt b}&{\tt n}&{\tt f}\\
{\tt n}&{\tt n}&{\tt n}&{\tt n}&{\tt n}\\
{\tt f}&{\tt n}&{\tt f}&{\tt n}&{\tt f}\\
\end{tabular}
}
\end{center}
\caption{Truth tables of basic connectors}
\label{fig:truth-tables-con}
\end{figure}

\begin{figure}[t]
\begin{center}
{\footnotesize
\begin{tabular}{c|cccc}
\,$\Rightarrow$\,&{\tt t}&{\tt b}&{\tt n}&{\tt f}\\
\hline
{\tt t}&{\tt t}&{\tt b}&{\tt n}&{\tt f}\\
{\tt b}&{\tt t}&{\tt b}&{\tt t}&{\tt b}\\
{\tt n}&{\tt t}&{\tt t}&{\tt n}&{\tt n}\\
{\tt f}&{\tt t}&{\tt t}&{\tt t}&{\tt t}\\
\end{tabular}
\qquad
\begin{tabular}{c|cccc}
\,$\to$\,&{\tt t}&{\tt b}&{\tt n}&{\tt f}\\
\hline
{\tt t}&{\tt t}&{\tt b}&{\tt n}&{\tt f}\\
{\tt b}&{\tt t}&{\tt b}&{\tt n}&{\tt f}\\
{\tt n}&{\tt t}&{\tt t}&{\tt t}&{\tt t}\\
{\tt f}&{\tt t}&{\tt t}&{\tt t}&{\tt t}\\
\end{tabular}
\qquad
\begin{tabular}{c|cccc}
\,$\hookrightarrow$\,&{\tt t}&{\tt b}&{\tt n}&{\tt f}\\
\hline
{\tt t}&{\tt t}&{\tt b}&{\tt n}&{\tt f}\\
{\tt b}&{\tt t}&{\tt t}&{\tt n}&{\tt n}\\
{\tt n}&{\tt t}&{\tt b}&{\tt t}&{\tt b}\\
{\tt f}&{\tt t}&{\tt t}&{\tt t}&{\tt t}\\
\end{tabular}

\vspace{.5cm}
\begin{tabular}{c|cccc}
\,$\stackrel{*}{\to}$\,&{\tt t}&{\tt b}&{\tt n}&{\tt f}\\
\hline
{\tt t}&{\tt t}&{\tt f}&{\tt n}&{\tt f}\\
{\tt b}&{\tt t}&{\tt b}&{\tt n}&{\tt f}\\
{\tt n}&{\tt t}&{\tt n}&{\tt t}&{\tt n}\\
{\tt f}&{\tt t}&{\tt t}&{\tt t}&{\tt t}\\
\end{tabular}
\qquad
\begin{tabular}{c|cccc}
\,$\stackrel{*}{\hookrightarrow}$\,&{\tt t}&{\tt b}&{\tt n}&{\tt f}\\
\hline
{\tt t}&{\tt t}&{\tt f}&{\tt f}&{\tt f}\\
{\tt b}&{\tt t}&{\tt t}&{\tt f}&{\tt f}\\
{\tt n}&{\tt t}&{\tt f}&{\tt t}&{\tt f}\\
{\tt f}&{\tt t}&{\tt t}&{\tt t}&{\tt t}\\
\end{tabular}
}
\end{center}

\caption{Truth tables of implications}
\label{fig:truth-tables-imp}
\end{figure}

\smallskip
Looking at the truth tables of the two implications $\to$ and $\hookrightarrow$, when the left hand side is valid in $S$, it is necessary that the right hand side be also valid in order to make the implication valid.  More precisely, if $\Phi_1$ is valid, the implications $\Phi_1 \to \Phi_2$ and $\Phi_1 \hookrightarrow \Phi_2$ are valid in $S$ for any truth assignment $v$ such that:

$-$~$v(\Phi_1)={\tt t}$ and $v(\Phi_2)={\tt t}$ or $v(\Phi_2)={\tt b}$,

$-$~$v(\Phi_1)={\tt b}$ and $v(\Phi_2)={\tt t}$ or $v(\Phi_2)={\tt b}$.

\smallskip\noindent
As a consequence,  if it happens that $\Phi_1$  is valid while $\Phi_2$ is not, the implication can be made valid  by changing the truth value of $\Phi_2$ in two  ways: making it either true or inconsistent. As will be seen later, we choose to set $v_S(\Phi_2)$ as equal to $v_S(\Phi_1)$. This choice is motivated by the fact that it is the only one satisfying $v(\Phi_1)\preceq_k v(\Phi_2)$ and $v(\Phi_1)\preceq_t v(\Phi_2)$. 

\begin{figure}[t]
\begin{center}
{\footnotesize
\begin{tabular}{c|c}
\,$\phi$\,&\,${\bf T}\phi$\,\\
\hline
{\tt t}&{\tt t}\\
{\tt b}&{\tt f}\\
{\tt n}&{\tt f}\\
{\tt f}&{\tt f}\\
\end{tabular}
\qquad
\begin{tabular}{c|c}
\,$\phi$\,&\,${\bf B}\phi$\,\\
\hline
{\tt t}&{\tt f}\\
{\tt b}&{\tt t}\\
{\tt n}&{\tt f}\\
{\tt f}&{\tt f}\\
\end{tabular}
\qquad
\begin{tabular}{c|c}
\,$\phi$\,&\,${\bf N}\phi$\,\\
\hline
{\tt t}&{\tt f}\\
{\tt b}&{\tt f}\\
{\tt n}&{\tt t}\\
{\tt f}&{\tt f}\\
\end{tabular}
\qquad
\begin{tabular}{c|c}
\,$\phi$\,&\,${\bf F}\phi$\,\\
\hline
{\tt t}&{\tt f}\\
{\tt b}&{\tt f}\\
{\tt n}&{\tt f}\\
{\tt f}&{\tt t}\\
\end{tabular}
\qquad
\begin{tabular}{c|c}
\,$\phi$\,&\,$\circ \phi$\,\\
\hline
{\tt t}&{\tt f}\\
{\tt b}&{\tt f}\\
{\tt n}&{\tt t}\\
{\tt f}&{\tt t}\\
\end{tabular}
}
\end{center}
\caption{More truth tables}
\label{fig:more-tables}
\end{figure}

To see how to express FDE implication $\to$ in terms of the basic connectors $\neg$, $\vee$, $\wedge$, $\not\sim$, $\oplus$ and $\otimes$ of \cite{Tsoukias}, we recall that $\sim$ is defined  for every formula $\phi$ by:

\smallskip\centerline{
$\sim \phi = \neg \not\sim\neg\not\sim \phi = \not\sim\neg\not\sim \neg \phi$.}

\smallskip\noindent
Moreover,  the additional connectors ${\bf T}$, ${\bf B}$, ${\bf N}$ and ${\bf F}$,  whose truth tables are shown in Figure~\ref{fig:more-tables}, allow to `characterize' each truth value in terms of only the standard ones, namely ${\tt t}$ and ${\tt f}$. Roughly speaking, given a truth value ${\tt v}$, the corresponding connector which we denote by ${\bf V}$, is defined for every formula $\phi$ by the fact that ${\bf V}\phi$ is true if  $\phi$ has the truth value  ${\tt v}$ and false otherwise.

In what follows, {\em equivalent} formulas $\phi_1$ and $\phi_2$ are defined as formulas having the same truth tables, which is denoted by $\phi_1 \equiv \phi_2$. Using this notation, it is shown in \cite{Tsoukias} that for each of these connectors, the following equivalences hold:

\smallskip\centerline{
${\bf T}\phi \equiv \phi \wedge  \sim \neg \phi$ ; ${\bf B}\phi \equiv  \not\sim \phi \wedge  \not\sim \neg \phi$ ; ${\bf N}\phi \equiv  \sim \not\sim \phi \wedge  \neg  \not\sim \phi$ ; ${\bf F}\phi \equiv  \sim \phi \wedge  \neg\phi$.}

\smallskip\noindent
We now consider an additional connector denoted by $\circ$, and defined as follows:

\smallskip\centerline{$\circ\phi = {\bf N(\phi)} \vee {\bf F}(\phi)$.}

\smallskip\noindent
 This new connector  `characterizes' the non validity of a formula $\phi$ in terms of  the truth values ${\tt t}$ and ${\tt f}$. In other words, as shown in Figure~\ref{fig:more-tables}, $\circ \phi$ is true if $\phi$ is not valid and false otherwise. 
%
%
%

An important point is that this new connector allows for an intuitively appealing expression of the FDE implication (\cite{Arieli1998,Hazen17})  $\to$. It is indeed easy to show based on the truth tables of Figure~\ref{fig:truth-tables-imp} and Figure~\ref{fig:more-tables}, that for all formulas $\phi_1$ and $\phi_2$, the following equivalence holds:

\smallskip\centerline{
$\phi_1 \to \phi_2 \equiv \circ \phi_1 \vee \phi_2$.}

\smallskip\noindent
Since $\circ \phi$ can be read as {\em true if $\phi$ is not valid and false otherwise}, the equivalence above suggests that $\phi_1 \to \phi_2$ can be read as {\em either $\phi_1$ is not valid or $\phi_2$ is valid}. We emphasize that this is pretty much like implication in standard FOL that is read as {\em either not $\phi_1$ is true or $\phi_2$ is true}.

Based on these remarks and on truth tables in Figures~\ref{fig:truth-tables-con}--\ref{fig:more-tables}, the following proposition holds. The first item in this proposition is the subject of some comments in the next section.
\begin{proposition}\label{prop:implication}
Given formulas $\phi_1$, $\phi_2$ and $\phi_3$, the following equivalences hold:

\smallskip
$-$~$(\phi_1 \vee \phi_2) \to \phi_3 \equiv   (\phi_1 \oplus \phi_2) \to \phi_3 \equiv   (\phi_1 \to \phi_3)\wedge (\phi_2 \to \phi_3)$

$-$~$(\phi_1 \wedge \phi_2) \to \phi_3 \equiv   (\phi_1 \otimes \phi_2) \to \phi_3 \equiv   (\phi_1 \to \phi_3)\vee (\phi_2 \to \phi_3)$.
\end{proposition}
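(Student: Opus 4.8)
The plan is to reduce everything to the equivalence $\phi_1 \to \phi_2 \equiv \circ \phi_1 \vee \phi_2$ established just above the statement, so that each of the six implications occurring in the proposition is rewritten as a formula built only from $\circ$, $\vee$ and $\wedge$, and then to finish by purely algebraic manipulation inside the distributive bi-lattice, using two auxiliary ``validity'' identities for the connector $\circ$.

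First I would record the two identities
\[
\circ(\phi_1 \vee \phi_2) \equiv \circ(\phi_1 \oplus \phi_2) \equiv \circ\phi_1 \wedge \circ\phi_2,
\qquad
\circ(\phi_1 \wedge \phi_2) \equiv \circ(\phi_1 \otimes \phi_2) \equiv \circ\phi_1 \vee \circ\phi_2 .
\]
These are verified by a short inspection of the tables of Figure~\ref{fig:truth-tables-con}: one checks that $\phi_1 \vee \phi_2$ and $\phi_1 \oplus \phi_2$ are both valid (take a designated value ${\tt t}$ or ${\tt b}$) exactly when at least one of $\phi_1,\phi_2$ is valid, whereas $\phi_1 \wedge \phi_2$ and $\phi_1 \otimes \phi_2$ are valid exactly when both are. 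Since, by Figure~\ref{fig:more-tables}, $\circ\psi$ equals ${\tt t}$ precisely when $\psi$ is not valid and ${\tt f}$ otherwise---in particular $\circ$ never returns ${\tt b}$ or ${\tt n}$---the outer $\wedge$ and $\vee$ applied to $\circ$-terms reduce to classical conjunction and disjunction, and the two identities follow. The ``De Morgan'' shape here (the outer connective flips between $\vee$ and $\wedge$) is the crux, and it is also exactly what explains the collapse of the $\vee$- and $\oplus$-versions onto a single implication (and likewise the $\wedge$- and $\otimes$-versions): under $\to$ the antecedent enters only through $\circ$, and $\vee,\oplus$ share a validity profile, as do $\wedge,\otimes$.

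With these identities in hand the two lines of the proposition become short lattice computations. For the first line I rewrite $(\phi_1 \to \phi_3) \wedge (\phi_2 \to \phi_3) \equiv (\circ\phi_1 \vee \phi_3) \wedge (\circ\phi_2 \vee \phi_3)$ and apply the distributive-lattice identity $(a \vee c) \wedge (b \vee c) \equiv (a \wedge b) \vee c$, valid for $\preceq_t$ by distributivity of the stated bi-lattice together with absorption, to obtain $(\circ\phi_1 \wedge \circ\phi_2) \vee \phi_3$; the first identity above turns this into $\circ(\phi_1 \vee \phi_2) \vee \phi_3 \equiv (\phi_1 \vee \phi_2) \to \phi_3$, and equally into $(\phi_1 \oplus \phi_2) \to \phi_3$. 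For the second line I symmetrically rewrite $(\phi_1 \to \phi_3) \vee (\phi_2 \to \phi_3) \equiv (\circ\phi_1 \vee \phi_3) \vee (\circ\phi_2 \vee \phi_3)$ and collapse it to $(\circ\phi_1 \vee \circ\phi_2) \vee \phi_3$ by associativity, commutativity and idempotence of $\vee$; the second identity then yields $\circ(\phi_1 \wedge \phi_2) \vee \phi_3 \equiv (\phi_1 \wedge \phi_2) \to \phi_3 \equiv (\phi_1 \otimes \phi_2) \to \phi_3$.

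The only genuine obstacle is bookkeeping rather than depth: I must make sure the lattice manipulations invoked---$(a\vee c)\wedge(b\vee c)\equiv(a\wedge b)\vee c$, idempotence, absorption---are legitimate in this setting, which they are because $\{{\tt t},{\tt b},{\tt n},{\tt f}\}$ under $\preceq_t$ is a distributive lattice with $\wedge$ and $\vee$ as meet and join. A fully self-contained alternative, should one distrust the algebraic shortcuts, is to verify each line directly on all $4^3 = 64$ assignments of $(\phi_1,\phi_2,\phi_3)$ from Figures~\ref{fig:truth-tables-con} and \ref{fig:truth-tables-imp}; the reduction through $\circ$ is simply a way to avoid that enumeration and to expose why the truth-order and knowledge-order conjunctions (respectively disjunctions) behave identically on the left of the FDE implication.
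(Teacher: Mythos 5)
Your proof is correct. Note first that the paper itself gives no explicit argument for this proposition: it merely states that the result holds ``based on these remarks and on truth tables,'' the implicit justification being either a direct check of all $4^3$ assignments or an appeal to the rewriting $\phi_1 \to \phi_2 \equiv \circ\phi_1 \vee \phi_2$ established just before. Your route starts from that same rewriting but then replaces the enumeration by a structured algebraic argument, and every step checks out: the two ``validity De Morgan'' identities $\circ(\phi_1\vee\phi_2)\equiv\circ(\phi_1\oplus\phi_2)\equiv\circ\phi_1\wedge\circ\phi_2$ and $\circ(\phi_1\wedge\phi_2)\equiv\circ(\phi_1\otimes\phi_2)\equiv\circ\phi_1\vee\circ\phi_2$ are exactly right (one verifies from Figure~\ref{fig:truth-tables-con} that $\vee$ and $\oplus$ share the validity profile ``at least one operand designated,'' while $\wedge$ and $\otimes$ share ``both operands designated,'' and $\circ$ ranges over $\{{\tt t},{\tt f}\}$ only, so the outer connectives act classically); and the identity $(a\vee c)\wedge(b\vee c)\equiv(a\wedge b)\vee c$ is legitimate since $\{{\tt t},{\tt b},{\tt n},{\tt f}\}$ under $\preceq_t$ is the four-element Boolean (hence distributive) lattice with $\wedge,\vee$ as meet and join, so distributivity plus absorption apply. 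What your approach buys over the paper's implicit truth-table check is an explanation of \emph{why} the $\vee$/$\oplus$ pair and the $\wedge$/$\otimes$ pair collapse on the left of $\to$: the antecedent enters only through $\circ$, which sees only validity, and the two connectives in each pair are indistinguishable at the level of validity. This is a genuinely more informative derivation than the one the paper gestures at, and it is fully rigorous.
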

%
%
\subsection{About Functional Completeness}
Functional completeness in our context can be stated as follows: Given a function $W$ from $\{{\tt t}, {\tt b}, {\tt n}, {\tt f}\}^k$ to $\{{\tt t}, {\tt b}, {\tt n}, {\tt f}\}$ where $k$ is a positive integer, can $W$ be `expressed' as a formula $\Phi_W(P_1, P_2, \ldots , P_k)$ involving $k$ propositional variables $P_1, P_2, \ldots, P_k$? More formally, given $W$, the problem is to prove that there exists a formula $\Phi_W$ such that for ${\tt V}=({\tt v}_1,{\tt v}_2 ,\ldots ,{\tt v}_k)$ in $\{{\tt t}, {\tt b}, {\tt n}, {\tt f}\}^k$, if $v$ is a valuation such that for $i=1, 2, \ldots, k$, $v(P_i)={\tt v}_i$, then $v(\Phi_W({\tt v}_1, {\tt v}_2, \ldots , {\tt v}_k))=W({\tt V})$.

This question has been answered positively in \cite{Arieli1998} where the proposed formula $\Phi_W$ involves the connectors $\neg$, $\wedge$ and $\to$ and the constants ${\tt b}$ and ${\tt n}$. The authors give also some other variants of this result by proposing various sets of connectors,  all of which containing the implication $\to$.

Given that $\phi_1 \to \phi_2$ can be expressed as $\circ \phi_1 \vee \phi_2$, functional completeness can also be shown based on the connectors introduced in \cite{Tsoukias}, that is  $\neg$, $\not\sim$, $\vee$, $\wedge$, $\oplus$ and $\otimes$, but not $\to$. We prove this result in two ways: one based on \cite{Arieli1998}, and one more direct, using the connectors defined in \cite{Tsoukias}.

\smallskip\noindent
{\bf{Proof based on \cite{Arieli1998}.}} 
In \cite{Arieli1998}, it is shown that the language $L^*=\{\neg, \wedge, \to, {\tt n}, {\tt b}\}$  is functionally complete, meaning that for every $k \geq0$ and every function $W$ from $\{{\tt t}, {\tt b}, {\tt n}, {\tt f}\}^k$ to $\{{\tt t}, {\tt b}, {\tt n}, {\tt f}\}$ there exists a formula $\Phi^*_W$ in $L^*$ involving $k$ propositional variables $P_1, P_2, \ldots, P_k$ such that, for ${\tt V}=({\tt v}_1,{\tt v}_2 ,\ldots ,{\tt v}_k)$ in $\{{\tt t}, {\tt b}, {\tt n}, {\tt f}\}^k$, if $v$ is a valuation such that for $i=1, 2, \ldots, k$, $v(P_i)={\tt v}_i$, then $v(\Phi^*_W({\tt v}_1, {\tt v}_2, \ldots , {\tt v}_k))=W({\tt V}).$

Thus, given $W$ from $\{{\tt t}, {\tt b}, {\tt n}, {\tt f}\}^k$ to $\{{\tt t}, {\tt b}, {\tt n}, {\tt f}\}$, by replacing in $\Phi^*_W$ every occurrence of $\phi_1 \to \phi_2$ by $\circ\phi_1 \vee \phi_2$ we obtain a formula $\Phi^\circ_W$ that, using the definitions of $\circ$ and of the connectors {\bf N} and {\bf F}, can be expressed by using the basic connectors $\neg,$ $\wedge,$ $\vee,$ $\oplus,$ $\otimes,$ $\not\sim$ and the four truth values.

%
%
%
%
%
%
\smallskip\noindent
{\bf{Direct proof based on \cite{Tsoukias}.}} 
Based on the connectors ${\bf T}$, ${\bf B}$, ${\bf N}$ and ${\bf F}$ introduced in \cite{Tsoukias}, every  ${\tt V}=({\tt v}_1,{\tt v}_2 ,\ldots ,{\tt v}_k)$ in $\{{\tt t}, {\tt b}, {\tt n}, {\tt f}\}^k$ is associated with a formula $\phi_{{\tt V}}(P_1, P_2, \ldots , P_k)$ defined as follows:

\smallskip\centerline{
$\phi_{{\tt V}}(P_1, P_2, \ldots , P_k) = \bigwedge_{i=1}^{i=k}\phi_i(P_i)$}

\smallskip\noindent
where, for $i=1,2, \ldots ,k$, $\phi_i(P_i)={\bf T}P_i$ if ${\tt v}_i ={\tt t}$, $\phi_i(P_i)={\bf B}P_i$ if ${\tt v}_i ={\tt b}$, $\phi_i(P_i)={\bf N}P_i$ if ${\tt v}_i ={\tt n}$ and $\phi_i(P_i)={\bf F}P_i$ if ${\tt v}_i ={\tt f}$.

It is thus easy to see that $v(\phi_{{\tt V}}(P_1, P_2, \ldots , P_k))={\tt t}$ if for $i=1,2, \ldots ,k$, $v(P_i)={\tt v}_i$ and $v(\phi_{{\tt V}}(P_1, P_2, \ldots , P_k))={\tt f}$ otherwise.


\smallskip
Now, given a function $W$ from $\{{\tt t}, {\tt b}, {\tt n}, {\tt f}\}^k$ to $\{{\tt t}, {\tt b}, {\tt n}, {\tt f}\}$, we consider the partition induced by $W$ on $\{{\tt t}, {\tt b}, {\tt n}, {\tt f}\}^k$, defined by $\{W^{-1}({\tt t}),$ $W^{-1}({\tt b}),$ $W^{-1}({\tt n}),$ $W^{-1}({\tt f})\}$.  For every truth value ${\tt v}$ in $\{{\tt t}, {\tt b}, {\tt n}, {\tt f}\}$, the corresponding element  $W^{-1}({\tt v})$ of this partition, which is a subset of $\{{\tt t}, {\tt b}, {\tt n}, {\tt f}\}^k$, is associated with a formula $\Phi_{\tt v}$ defined by: 

\smallskip\centerline{
$\Phi_{\tt v} = \bigvee_{{\tt V} \in W^{-1}({\tt v})} \phi_{{\tt V}}$.}

\smallskip\noindent
It can be seen that for every ${\tt v}$ in $\{{\tt t}, {\tt b}, {\tt n}, {\tt f}\}$, $v(\Phi_{\tt v}) = {\tt t}$ if $(v(P_1), v(P_2), \ldots , v(P_k))$ is in $W^{-1}({\tt v})$, and $v(\Phi_{\tt v}) = {\tt f}$ otherwise.
The targetted formula $\Phi_W$ is defined by:

\smallskip\centerline{
$\Phi_W=((\Phi_{\tt t} \vee \neg \Phi_{\tt f}) \otimes \sim\not\sim \Phi_{\tt n}) \oplus \not\sim \Phi_{\tt b}$.}

\smallskip\noindent
The proof that $\Phi_W$ is indeed the expected formula is done by successively considering the four possible truth values. For ${\tt V}=({\tt v}_1,{\tt v}_2 ,\ldots ,{\tt v}_k)$, consider the following cases: 
\begin{itemize}
\item
${\tt V} \in W^{-1}({\tt t}):$ In this case, we have that $W({\tt V})={\tt t}$. On the other hand, if $v$ is such that for $i=1,2,\ldots ,k$, $v(P_i)={\tt v}_i$, $v(\Phi_{\tt t}) = {\tt t}$, $v(\Phi_{\tt b}) = {\tt f}$, $v(\Phi_{\tt n}) = {\tt f}$ and $v(\Phi_{\tt f}) = {\tt f}$, $v(\Phi_W)$ evaluates as $v(\Phi_W)=(({\tt t} \vee \neg {\tt f}) \otimes \sim\not\sim {\tt f}) \oplus \not\sim{\tt f}={\tt t}$.
Thus, $W({\tt V})=v(\Phi_W)={\tt t}$.
\item
${\tt V} \in W^{-1}({\tt b}):$  In this case, we have that $W({\tt V})={\tt b}$. On the other hand, if $v$ is such that for $i=1,2,\ldots ,k$, $v(P_i)={\tt v}_i$,   $v(\Phi_{\tt t}) = {\tt f}$, $v(\Phi_{\tt b}) = {\tt t}$, $v(\Phi_{\tt n}) = {\tt f}$ and $v(\Phi_{\tt f}) = {\tt f}$, $v(\Phi_W)$ evaluates as 
$v(\Phi_W)=(({\tt f} \vee \neg {\tt f}) \otimes \sim\not\sim {\tt f}) \oplus \not\sim{\tt t}={\tt b}$.
Thus, $W({\tt V})=v(\Phi_W)={\tt b}$.
\item
${\tt V} \in W^{-1}({\tt n}):$   In this case, we have that $W({\tt V})={\tt n}$. On the other hand, if $v$ is such that for $i=1,2,\ldots ,k$, $v(P_i)={\tt v}_i$, $v(\Phi_{\tt t}) = {\tt f}$, $v(\Phi_{\tt b}) = {\tt f}$, $v(\Phi_{\tt n}) = {\tt t}$ and $v(\Phi_{\tt f}) = {\tt f}$, $v(\Phi_W)$ evaluates as $v(\Phi_W)=(({\tt f} \vee \neg {\tt f}) \otimes \sim\not\sim {\tt t}) \oplus \not\sim{\tt f}={\tt n}$.
Thus, $W({\tt V})=v(\Phi_W)={\tt n}$.
\item
${\tt V} \in W^{-1}({\tt f}):$   In this case, we have that $W({\tt V})={\tt f}$. On the other hand, if $v$ is such that for $i=1,2,\ldots ,k$, $v(P_i)={\tt v}_i$, $v(\Phi_{\tt t}) = {\tt f}$, $v(\Phi_{\tt b}) = {\tt f}$, $v(\Phi_{\tt n}) = {\tt f}$ and $v(\Phi_{\tt f}) = {\tt t}$, $v(\Phi_W)$ evaluates as $v(\Phi_W)=(({\tt f} \vee \neg {\tt t}) \otimes \sim\not\sim {\tt f}) \oplus \not\sim{\tt f}={\tt f}$.
Thus, $W({\tt V})=v(\Phi_W)={\tt f}$.
\end{itemize}
As a consequence, we obtain that $W({\tt V})=v(\Phi_W)$ thus that the formula $\Phi_W$  has the same truth values as the truth values defined by the function $W$.

%
\section{Four-Valued Logic and Databases}\label{sec:database-sem}
\subsection{Database Syntax}
As usual when dealing with deductive databases, the considered alphabet is made of constants, variables and predicate symbols with a fixed arity. We thus assume a fixed set of contants, called {\em universe}  and denoted by  ${\cal U}$. It should be noticed that ${\cal U}$ may be infinite. 

As in traditional approaches, a term $t$ is either a constant from ${\cal U}$ or a variable, an {\em atomic formula} or an atom is a formula of the form $P(t_1, t_2, \ldots , t_k)$ where $P$ is a $k$-ary predicate and for every $i=1,2, \ldots ,k$, $t_i$ is a term. A formula is said to be {\em ground} if it contains no variables. A {\em fact} is a ground atom, that is an atom in which all terms are constants. Moreover, a {\em literal} is either an atom or the negation of an atom. In the former case the literal is said to be {\em positive} and in the latter case it is said to be {\em negative}. The {\em Herbrand Base} associated with  ${\cal U}$ is the set of all facts that can be built up using the constants in ${\cal U}$ and the predicates. Clearly, if ${\cal U}$ is infinite, then so is ${\cal HB}$.

In the traditional two-valued setting under the CWA (Closed World Assumption \cite{Reiter77}), the database extension and the database semantics are sets of facts, meant to be true, and the facts not  in the database semantics are set to be false. In our context of Four-valued logic under the OWA (Open World Assumption),  the database extension and the database semantics may contain facts that are either true, inconsistent or false, assuming that non stored facts are unknown.
To account for this situation, we consider sets of pairs of the form $\langle \varphi, {\tt v}\rangle$ where $\varphi$ is a fact in ${\cal HB}$ and where {\tt v} is one of the values {\tt t}, {\tt b} or {\tt f}, while facts whose truth value is {\tt n} are not stored.
Moreover, such a set $S$ is said to be {\em consistent} if for all distinct pairs $\langle \varphi_1,{\tt v}_1 \rangle$ and $\langle \varphi_2,{\tt v}_2 \rangle$ in $S$, $\varphi_1 \ne \varphi_2$. Consequently a consistent  set $S$ is seen as a valuation $v_S$ defined for every $\varphi$ in ${\cal U}$ by:

\smallskip\centerline{
$v_S(\varphi)={\tt v}$, if $S$ contains a pair  $\langle \varphi, {\tt v}\rangle$ ; $v_S(\varphi)={\tt n}$, otherwise.}

\smallskip\noindent
Consistent sets of pairs are called {\em v-sets}, standing for {\em valuated} sets.
%

Given a v-set $S$ and a ground formula $\Phi$, $\Phi$ is said to be {\em valid in} $S$ if $v_S(\Phi)$ is designated. For example, $P(a) \to Q(b)$ is valid in $S_1=\{\langle P(a), {\tt t}\rangle$, $\langle Q(b), {\tt b}\rangle \}$ because $v_{S_1}(P(a) \to Q(b))={\tt b}$, but $P(a) \to Q(b)$ is not valid in $S_2=\{\langle P(a), {\tt t}\rangle \}$ because $v_{S_2}(P(a) \to Q(b))={\tt n}$.

The two orderings $\preceq_k$ and $\preceq_t$ are extended  to v-sets over the same base ${\cal HB}$ in a point-wise manner as follows.
\begin{definition}\label{def:orderings}
For all v-sets $S_1$ and $S_2$ over ${\cal U}$, $S_1 \preceq_k S_2$, respectively  $S_1 \preceq_t S_2$, holds if for every $\varphi$ in ${\cal U}$, $v_{S_1}(\varphi) \preceq_k v_{S_2}(\varphi)$, respectively $v_{S_1}(\varphi) \preceq_t v_{S_2}(\varphi)$, holds.
\end{definition}
For example for ${\cal HB}=\{P(a),$ $P(b),$ $P(c)\}$, $S_1=\{\langle P(a), {\tt t}\rangle\}$ and $S_2=\{\langle P(a), {\tt b}\rangle ,$ $ \langle P(b), {\tt f}\rangle\}$, we have $v_{S_1}(P(b))=v_{S_1}(P(c))=v_{S_2}(P(c))={\tt n}$. Thus:
\begin{itemize}
\item
$v_{S_1}(P(a)) \preceq_k v_{S_2}(P(a))$,  $v_{S_1}(P(b)) \preceq_k v_{S_2}(P(b))$ and  $v_{S_1}(P(c)) \preceq_k v_{S_2}(P(c))$,  implying that $S_1 \preceq_k S_2$ holds.
\item
$v_{S_2}(P(a)) \preceq_t v_{S_1}(P(a))$,  $v_{S_2}(P(b)) \preceq_t v_{S_1}(P(b))$ and $v_{S_2}(P(c)) \preceq_t v_{S_1}(P(c))$,  implying that $S_2 \preceq_t S_1$ holds.
\item
$\emptyset \preceq_k S_2$, because for every $\varphi$, $v_{\emptyset}(\varphi)={\tt n}$, the least value with respect to $\preceq_k$.
\item
$\emptyset$ and $S_2$ are not comparable with respect to $\preceq_t$, because $v_{\emptyset}(P(a))={\tt n}$ and $v_{S_2}(P(a))={\tt b}$ are not comparable with respect to $\preceq_t$.
\end{itemize}
The extension of $\preceq_k$ generalizes set inclusion in the sense that if $S_1 \subseteq S_2$, then we have $S_1 \preceq_k S_2$. Notice that, as the last item above shows, the truth ordering $\preceq_t$ does not satisfy this property, because $\emptyset \subseteq S_2$ holds while $\emptyset \preceq_t S_2$ does not.

In our context, as in approaches to Datalog databases (\cite{CeriGT90,Bidoit91}), a database consists of an {\em extension}  and a {\em set of rules}, formally defined as follows.
\begin{definition}\label{def:database}
A database $\Delta$ is a pair $\Delta=(E,R)$ where $E$ and $R$ are respectively called the {\em extension} and the {\em rule set} of $\Delta$. If $\Delta=(E,R)$, then:
\begin{itemize}
\item
$E$ is a v-set.
\item
$R$ is a set of rules of the form $\rho:h(X) \leftarrow B(X,Y)$ where the variables in $X$ are free in $h(X)$ and $B(X,Y)$ and the variables in $Y$ are free in $B(X,Y)$, and
\begin{enumerate}
\item
$B(X,Y)$ is a well formed formula involving the connectors $\neg$,  $\vee$, $\wedge$, $\oplus$ and $\otimes$. $B(X,Y)$ is called the {\em body} of $\rho$, denoted by $body(\rho)$.
\item
$h(X)$ is a positive or negative literal, called the {\em head} of $\rho$, denoted by $head(\rho)$.
\end{enumerate}
\end{itemize}
\end{definition}
It should be clear that the rules as defined above generalize standard Datalog$^{neg}$ rules (\cite{Bidoit91}). On the other hand, the definition above also generalizes rules as defined in \cite{Lau2019} where the bodies of the rules are restricted to be conjunctions only. Moreover, in our approach and contrary to \cite{Fitting91,Bidoit91}, rules may generate contradictory facts. It is important to notice that our approach is closely related to the generalized rules as introduced  in \cite{Fitting91}, with the following notable differences:
\begin{enumerate}
\item
In our approach, negative literals are allowed in the rule heads, which is not the case in \cite{Fitting91}.
\item
In our approach, several rules may have the same predicate involved in their head,  which is not the case in \cite{Fitting91}. This important point will be discussed later.
\item
In our approach, quantifiers are not allowed, whereas in \cite{Fitting91} four quantifiers are allowed ($\forall$ and $\exists$ associated with $\preceq_t$ and {\large $\Pi$} and {\large $\Sigma$} associated with $\preceq_k$).
\end{enumerate}
\subsection{Database Semantics}
As usual, rules are seen as implications, either $\to$ or $\hookrightarrow$ that must be valid in the database semantics. Notice in this respect that Figure~\ref{fig:truth-tables-imp} shows that for all formulas $\phi_1$ and $\phi_2$, $\phi_1\to \phi_2$ is valid if and only if so is $\phi_1\hookrightarrow \phi_2$. This explains why in \cite{Lau2019}, our approach has been shown to be `compatible' with either implication. Here, we focus on FDE implication $\to$, thus forgetting the implication  $\hookrightarrow$ of \cite{Tsoukias}.

Similarly to the standard Datalog approach, a model of a database $\Delta=(E,R)$ could be defined as a v-set $M$ containing $E$ and in which all rules in $R$ are valid. However,  such a definition would raise important problems:
\begin {enumerate}
\item
{\em A database might have no model.} To see this, consider $\Delta=(E,R)$ where $R=\{Q(b) \leftarrow P(a)\}$ and where $E=\{\langle P(a), {\tt t}\rangle$, $\langle Q(b), {\tt f}\rangle\}$. Then in any model $M$,  $v_M(P(a) \to Q(b))={\tt f}$ because $M$ must contain the two pairs of $E$.  Notice that this cannot happen in standard Datalog since the storage of false facts is not allowed.
\item
{\em A database might have more than one minimal model, with respect to set inclusion.} This case is illustrated above where $S'_1=\{\langle P(a), {\tt t}\rangle$, $\langle Q(b), {\tt t}\rangle\}$ are $S'_2=\{\langle P(a), {\tt t}\rangle$, $\langle Q(b), {\tt b}\rangle\}$ two minimal v-sets containing $\{\langle P(a), {\tt t}\rangle\}$ in which $Q(b) \leftarrow P(a)$ is valid. This situation does not happen in standard Datalog because the minimal model is known to be unique.
\end{enumerate}
Whereas the second issue raised above will be further investigated later, the first issue is solved in our approach by giving the priority to the database extension over the rules. To do so, we prevent from applying a rule in $R$ when it leads to some conflict with a v-pair in $E$.

In order to implement this policy, given a database $\Delta=(E,R)$  over universe ${\cal U}$, we denote by $inst(E,R)$ the set of all instantiations $\rho$ of rules in $R$ such that $head(\rho)$ does not occur in $E$. Moreover, given a rule $\rho:head(\rho) \leftarrow body(\rho)$ we denote by $\rho^\to$ the formula $body(\rho) \to head(\rho)$. The definition of a model of $\Delta$ then follows.
\begin{definition}\label{def:model}
Let $\Delta=(E,R)$ be a database. A v-set $M$ is a {\em model} of $\Delta$ if the following holds:

\smallskip
$1.$ $E \subseteq M$, i.e., $M$ must contain the database extension, and

$2.$ every $\rho$ of $inst(E,R)$ is valid in $M$, that is, $v_M(\rho^\to)$  is designated.
\end{definition}
To illustrate Definition~\ref{def:model}, consider the following simple examples:
\begin{itemize}
\item
$\Delta=(E,R)$ with $E=\{\langle P(a), {\tt t}\rangle$, $\langle Q(b), {\tt f}\rangle\}$ and $R=\{Q(b) \leftarrow P(a)\}$. $E$ is a model of $\Delta$ as $inst(E,R)= \emptyset$. It is easy to see that $E$ is the only minimal model with respect to set inclusion.
\item
$\Delta=(E,R)$ with $E=\{\langle P(a), {\tt t}\rangle\}$ and $R=\{Q(b) \leftarrow P(a)\}$ $S_1=\{\langle P(a), {\tt t}\rangle$, $\langle Q(b), {\tt t}\rangle\}$ and $S_2=\{\langle P(a), {\tt t}\rangle$, $\langle Q(b), {\tt b}\rangle\}$ are  two models of $\Delta$. Moreover, it can be seen that these two models are minimal with respect to set inclusion.
\end{itemize}
Given a database $\Delta$, an  immediate consequence operator is defined below. It will then be seen that this  allows for computing a particular model of $\Delta$, which  we call the {\em semantics} of $\Delta$.
\begin{definition}\label{def:sem-operator}
Let $\Delta=(E,R)$ be a database. The {\em semantic immediate consequence operator} associated with $\Delta$, denoted by $\Sigma_{\Delta}$, is defined for every v-set $S$ by the following steps:

\smallskip\noindent
$1.$~Define first $\Gamma^E_{\Delta}(S)$ as follows:
$$\begin{array}{rl}
\Gamma^E_{\Delta}(S) =S~\cup &\{\langle h, {\tt t}\rangle~|~(\exists \rho \in inst(E,R))(h=head(\rho) \wedge v_S(body(\rho))= {\sf t})\}\quad ~\\
\cup &\{\langle h, {\tt b}\rangle~|~(\exists \rho \in inst(E,R))(h=head(\rho) \wedge  v_S(body(\rho))= {\sf b})\}\\
\cup &\{\langle h, {\tt f}\rangle~|~(\exists \rho \in inst(E,R))(\neg h=head(\rho) \wedge  v_S(body(\rho))= {\sf t})\}\\
\cup &\{\langle h, {\tt b}\rangle~|~(\exists \rho \in inst(E,R))(\neg h=head(\rho) \wedge  v_S(body(\rho))= {\sf b})\}
\end{array}
$$
$2.$~Then, define $\Sigma_{\Delta}(S)$ by:
%
$\Sigma_{\Delta}(S) =\{\langle \varphi ,{\tt v}_{\oplus}(\varphi)\rangle~|~\varphi {\mbox{ occurs in }}\Gamma^E_{\Delta}(S)\}$, where
%

${\tt v}_{\oplus}(\varphi)=\bigoplus\{{\tt v}~|~\langle \varphi, {\tt v}\rangle \in \Gamma^E_{\Delta}(S)\}$.
\end{definition}
Definition~\ref{def:sem-operator} should be seen as fitting our view on rule semantics based of FDE implication, whose validity has been expressed earlier as {\em $\phi_1 \to \phi_2$ is valid if and only if whenever $\phi_1$ is valid, so is $\phi_2$}. This point of view is similar to that in Datalog databases (where `valid' means `true'), but different from the one in \cite{Fitting91}, where the truth value of the head of the rule is equated to that of the body, {\em whatever the truth value of the body}, even when it is ${\tt f}$. The following lemma shows basic properties of the operator $\Sigma_\Delta$.
\begin{lemma}\label{lem:basic}
For every database $\Delta=(E,R)$, $\Sigma_\Delta$ is monotonic and continuous with respect to $\preceq_k$.
\end{lemma}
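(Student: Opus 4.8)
The plan is to prove the two properties separately, establishing monotonicity first and then reusing it inside the continuity argument. Throughout I work coordinate-wise. By Definition~\ref{def:orderings} the v-sets over ${\cal HB}$ ordered by $\preceq_k$ form a complete lattice, isomorphic to the product of copies of $(\{{\tt t},{\tt b},{\tt n},{\tt f}\},\preceq_k)$ indexed by ${\cal HB}$, with $\emptyset$ as least element and with least upper bounds computed coordinate-wise, so that $v_{\bigsqcup_i S_i}(\varphi)=\bigoplus_i v_{S_i}(\varphi)$. The first preparatory step is to record that every connector allowed in a rule body, namely $\neg,\vee,\wedge,\oplus,\otimes$, is $\preceq_k$-monotone. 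For $\oplus$ and $\otimes$ this is immediate, since they are the join and meet of $\preceq_k$; for $\vee$ and $\wedge$ it is the interlacing property of the distributive bilattice and can be read directly off Figure~\ref{fig:truth-tables-con}; and $\neg$ fixes the $\preceq_k$-extremal values ${\tt n},{\tt b}$ and merely swaps the $\preceq_k$-incomparable pair ${\tt t},{\tt f}$, hence is a $\preceq_k$-automorphism. By structural induction on formulas it follows that for every instantiated rule $\rho$ the map $S\mapsto v_S(body(\rho))$ is $\preceq_k$-monotone.

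The second step is to rewrite $\Sigma_\Delta$ coordinate-wise from Definition~\ref{def:sem-operator}. A rule with positive head $\varphi$ contributes $g(v_S(body(\rho)))$ to $\varphi$, and a rule with negative head $\neg\varphi$ contributes $g'(v_S(body(\rho)))$ to $\varphi$, where $g$ maps ${\tt t},{\tt b}$ to themselves and ${\tt n},{\tt f}$ to ${\tt n}$, and $g'(x)=\neg g(x)$; a rule contributes nothing (i.e.\ ${\tt n}$) to facts not matching its head. Both $g$ and $g'$ are visibly $\preceq_k$-monotone. Consequently, for every fact $\varphi$,
\[
v_{\Sigma_\Delta(S)}(\varphi)\;=\;v_S(\varphi)\;\oplus\!\!\bigoplus_{\substack{\rho\in inst(E,R)\\ head(\rho)=\varphi}}\!\! g\big(v_S(body(\rho))\big)\;\oplus\!\!\bigoplus_{\substack{\rho\in inst(E,R)\\ head(\rho)=\neg\varphi}}\!\! g'\big(v_S(body(\rho))\big),
\]
with the convention that an empty join equals ${\tt n}$ (this matches the case where $\varphi$ does not occur in $\Gamma^E_\Delta(S)$). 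Note that $inst(E,R)$ depends only on $E$ and $R$, not on $S$, so the index sets above are the same for all arguments. Monotonicity of $\Sigma_\Delta$ is then immediate: each term on the right is a composite of $\preceq_k$-monotone maps and $\oplus$ is $\preceq_k$-monotone, so $S_1\preceq_k S_2$ forces $v_{\Sigma_\Delta(S_1)}(\varphi)\preceq_k v_{\Sigma_\Delta(S_2)}(\varphi)$ for every $\varphi$, that is, $\Sigma_\Delta(S_1)\preceq_k\Sigma_\Delta(S_2)$.

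For continuity I must show $\Sigma_\Delta(\bigsqcup_i S_i)=\bigsqcup_i\Sigma_\Delta(S_i)$ for every $\preceq_k$-directed family $\{S_i\}$; monotonicity already yields the inequality $\bigsqcup_i\Sigma_\Delta(S_i)\preceq_k\Sigma_\Delta(\bigsqcup_i S_i)$, so only the reverse inequality is at stake. The key is a stabilization fact: since $body(\rho)$ is a finite formula it mentions only finitely many facts, each of whose values increases along the directed family inside the \emph{finite} lattice $(\{{\tt t},{\tt b},{\tt n},{\tt f}\},\preceq_k)$ and therefore already attains its supremum at some stage; by directedness a single index $i^\ast$ works simultaneously for all of them, whence $v_{S_{i^\ast}}(body(\rho))=v_{S^\ast}(body(\rho))$, where $S^\ast=\bigsqcup_i S_i$. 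Combined with monotonicity of $g$ and $g'$, this gives the per-rule identity $g\big(v_{S^\ast}(body(\rho))\big)=\bigoplus_i g\big(v_{S_i}(body(\rho))\big)$, and likewise for $g'$. Substituting these identities together with $v_{S^\ast}(\varphi)=\bigoplus_i v_{S_i}(\varphi)$ into the coordinate-wise formula, and then interchanging the iterated joins, yields $v_{\Sigma_\Delta(S^\ast)}(\varphi)=\bigoplus_i v_{\Sigma_\Delta(S_i)}(\varphi)$ for every $\varphi$, which is the desired equality; the interchange is legitimate by the general associativity of least upper bounds in a complete lattice.

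The step I expect to be delicate is continuity rather than monotonicity. The subtlety is that a single fact $\varphi$ may be the head of infinitely many instantiated rules (when ${\cal U}$ is infinite and bodies carry free variables), so the join defining $v_{\Sigma_\Delta(S)}(\varphi)$ ranges over an infinite index set and one cannot hope for a single stage at which the whole value stabilizes. The stabilization argument must therefore be applied \emph{per rule} and only afterwards combined through the interchange of joins. It is crucial here that each rule body is finite, which localizes the limit to finitely many coordinates, and that $(\{{\tt t},{\tt b},{\tt n},{\tt f}\},\preceq_k)$ is finite, so that directed suprema are genuinely attained rather than merely approximated. I would state the finiteness of rule bodies explicitly, since it is exactly what rescues continuity in the infinite Herbrand base setting.
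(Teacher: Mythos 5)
Your proof is correct. The monotonicity half follows essentially the same route as the paper: both rewrite $v_{\Sigma_\Delta(S)}(\varphi)$ as a $\oplus$-join of $v_S(\varphi)$ with contributions from the instantiated rules whose head involves $\varphi$, and both reduce the claim to the $\preceq_k$-monotonicity of the body connectors checked against Figure~\ref{fig:truth-tables-con}. A small refinement on your side is worth noting: the paper indexes its joins by the sets $D^\pm_i(\varphi)$ of rules whose bodies are \emph{designated in $S_i$}, so the index sets themselves vary with the argument and one must additionally observe that $S_1\preceq_k S_2$ implies $D^\pm_1(\varphi)\subseteq D^\pm_2(\varphi)$; your gate functions $g$ and $g'$ keep the index set fixed and absorb the designation test into a monotone map, which sidesteps that point cleanly (and your convention that non-designated bodies contribute ${\tt n}$, the neutral element of $\oplus$, correctly reproduces Definition~\ref{def:sem-operator}). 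Where you genuinely diverge is continuity: the paper simply delegates this to the proof of Theorem~16 in \cite{Fitting91} and omits it, whereas you give a self-contained argument via stabilization of each finite rule body over the finite lattice $(\{{\tt t},{\tt b},{\tt n},{\tt f}\},\preceq_k)$, followed by an interchange of joins. Your identification of the delicate point is apt: a single fact may head infinitely many instantiated rules, so stabilization must be applied per rule before the joins are interchanged, and the finiteness of each body together with the finiteness of the truth-value lattice is exactly what makes the directed supremum attained. This buys the reader a complete proof at the cost of a page of lattice bookkeeping that the paper chose to outsource.
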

\begin{proof}
We first notice that the connectors involved in rule bodies are monotonic, that is, for all formulas $\phi_1$ and $\phi_2$ involving $\neg$, $\vee$, $\wedge$, $\oplus$ or $\otimes$,  if $S_1$ and $S_2$ are two v-sets such that $S_1 \preceq_k S_2$ then $v_{S_1}(\phi_1) \preceq_k v_{S_2}(\phi_2)$ (this can be checked for each operator based on the truth tables in Figure~\ref{fig:truth-tables-con}).

For every $\varphi$ in ${\cal HB}$ and every $i=1,2$, let $D^+_i(\varphi)$ (respectively $D^-_i(\varphi)$) denote the set of all rules $\rho$ in $inst(E,R)$ such that $v_{S_i}(body(\rho))$ is distinguished in $S_i$ and $head(\rho)=\varphi$ (respectively $head(\rho)=\neg \varphi$). Then, $v_{\Sigma_{\Delta}(S_i)}(\varphi)$ can be defined as follows: 
$$
v_{\Sigma_{\Delta}(S_i)}(\varphi)=v_{S_i}(\varphi)\oplus
\bigoplus_{\rho \in D^+_i(\varphi)}v_{S_i}(body(\rho)) \oplus \bigoplus_{\rho \in D^-_i(\varphi)} \neg v_{S_i}(body(\rho))
$$
By monotonicity of $\oplus$ and $\neg$, we obtain that , if $S_1 \preceq_k S_2$, then for every $\varphi$ in ${\cal HB}$, $v_{\Sigma_{\Delta}(S_1)}(\varphi)\preceq_k v_{\Sigma_{\Delta}(S_2)}(\varphi)$, thus entailing the monotonicity of $\Sigma_{\Delta}$ with respect to $\preceq_k$.
The proof that  $\Sigma_{\Delta}$ is continuous with respect to $\preceq_k$, is as in \cite{Fitting91} (see the proof of Theorem~16) and thus omitted here. 
\end{proof}
As a consequence of Lemma~\ref{lem:basic}, given $\Delta=(E,R)$, let $\left(\Sigma^i\right)_{i \geq 0}$  the sequence defined by

\smallskip
$\Sigma^0=E$, and for every $n \geq 1$, $\Sigma^n=\Sigma_\Delta(\Sigma^{n-1})$
 
\smallskip\noindent
has a limit which is the unique least-fixed point of $\Sigma_\Delta$ that is reached for some ordinal at most $\omega$. This limit, denoted by $\Sigma^*_\Delta$, is called the {\em semantics of} $\Delta$ and the valuation $v_{\Sigma^*_\Delta}$ is denoted by $v_\Delta$.

%
\begin{example}\label{ex:sem}
We illustrate the computation of the semantics in the context of our running example, where $\Delta=(E,R)$ is defined by:

\smallskip\noindent
$-$ $E=\{\langle H_1(101), {\tt f}\rangle$, $\langle H_2(101), {\tt f}\rangle$,  $\langle W_1(101), {\tt t}\rangle$,
$\langle H_2(202), {\tt t}\rangle$,  $\langle W_1(202), {\tt f}\rangle$, \\
\indent\indent ~~$\langle W_2(202), {\tt t}\rangle$, $\langle W_1(303), {\tt f}\rangle \}$
\\
$-$ $R=\{\rho_1, \rho_2,\rho_3,\rho_4,\rho_5,\rho_6,\rho_7\}$, where

\begin{tabular}{lll}
$\rho_1: Humid(x) \leftarrow H_1(x) \oplus H_2(x)$&\quad &$\rho_5: Cure(x)  \leftarrow Humid(x)$\\
$\rho_2: White(x) \leftarrow W_1(x) \oplus W_2(x)$& &$\rho_6: \neg Store(x)  \leftarrow \neg White(x)$\\
$\rho_3: Store(x)  \leftarrow \neg Humid(x) \wedge White(x)$& &$\rho_7: New\_test(x)  \leftarrow \neg White(x)$\\
$\rho_4: \neg Store(x)  \leftarrow Humid(x)$& &
\end{tabular}

\smallskip\noindent
We first note that in case, $inst(E,R)=R$ because no predicate occurring in $E$ appears in the heads of the rules of $R$. On the other hand, variables have only three possible instantiations, namely $101$, $202$ and $303$. The computation of  $\Sigma^*_{\Delta}$ is as follows, starting with $\Sigma^0=E$:
\begin{enumerate}
\item
$\Sigma^1=\Sigma_{\Delta}(\Sigma^{0})$. The rule $\rho_1$ generates $\langle Humid(101), {\tt f}\rangle$ and $\langle Humid(202), {\tt t}\rangle$, and  $\rho_2$ generates $\langle White(101), {\tt t}\rangle$, $\langle White(202), {\tt b}\rangle,$ and $\langle White(303), {\tt f}\rangle$.\\
Since  $\Sigma_\Delta(\Sigma^{0}) = \Gamma^E_{\Delta}(\Sigma^{0}) $ we obtain that $\Sigma^1=E \cup \{\langle Humid(101), {\tt f}\rangle$, $\langle Humid(202), {\tt t}\rangle,$ $\langle White(101), {\tt t}\rangle$, $\langle White(202), {\tt b}\rangle,$ $\langle White(303), {\tt f}\rangle\}$.
\item
$\Sigma^2=\Sigma_{\Delta}(\Sigma^{1})$. The computation involves the 5 rules $\rho_3$ \ldots $\rho_7$ as follows:\\
$-$ $\rho_3$ generates $\langle Store(101), {\tt t}\rangle$, because $\neg Humid(101) \wedge White(101)$ has truth value ${\tt t}$. The other instances of $\rho_3$ do not apply because the body is not valid.\\
$-$ $\rho_4$ and $\rho_5$ generate respectively $\langle Store(202), {\tt f}\rangle$ and $\langle Cure(202), {\tt t}\rangle$ because $Humid(202)$ has truth value ${\tt t}$. The other instances of $\rho_4$ and of $\rho_5$ do not apply because the body is not valid.\\
$-$ $\rho_6$ and $\rho_7$ generate respectively  $\langle Store(202), {\tt b}\rangle$ and $\langle New\_test(202), {\tt b}\rangle$ since $White(202)$ has truth value ${\tt b}$, remembering that $\neg {\tt b} ={\tt b}$. \\
$-$ $\rho_6$ and $\rho_7$ generate respectively  $\langle Store(303), {\tt f}\rangle$ and $\langle New\_test(303), {\tt t}\rangle$ since $White(303)$ has truth value ${\tt t}$.\\
$-$ As $\Gamma^E_{\Delta}(\Sigma^{1}) $ contains $\langle Store(202), {\tt b}\rangle$  and $\langle Store(202), {\tt t}\rangle$, the computation of $\Sigma^2$ consists in integrating these v-pairs into $\langle Store(202), {\tt b}\rangle$, remembering that ${\tt b} \oplus {\tt t}={\tt b}$.
We thus obtain that \\
$\Sigma^2=\Sigma^1 \cup\{\langle Store(101), {\tt t}\rangle$, $\langle Store(202), {\tt b}\rangle$, $\langle Store(303), {\tt f}\rangle$, $\langle Cure(202), {\tt t}\rangle$,\\
\rightline{$\langle New\_test(202), {\tt b}\rangle$, $\langle New\_test(303), {\tt t}\rangle \}$.\qquad}
\item
Since no rule applies on $\Sigma^2$ to produce new v-pairs, the computation stops returning $\Sigma^*_{\Delta}=\Sigma^2$.
\end{enumerate}
We draw attention on that  $\Sigma^*_{\Delta}$ is a model of $\Delta$ because $E \subseteq \Sigma^*_{\Delta}$ and all instantiations of the rules in $R$ are valid. For example the instantiation of $x$ in $\rho_4$ and $\rho_6$ by $202$ is valid in $\Sigma^*_{\Delta}$ because:

\smallskip
$-$~$v_{\Delta}(\rho_4^{\to})={\tt b}$, since $v_{\Delta}(Humid(202))={\tt t}$ and\\
\indent\indent $v_{\Delta}(Store(202))= v_{\Delta}(\neg Store(202))={\tt b}$

$-$~$v_{\Delta}(\rho_6^{\to})= {\tt b}$, since $v_{\Delta}(\neg White(202))=v_{\Delta}(\neg Store(202))={\tt b}$.\hfill$\Box$
\end{example}
The following proposition, shows that $\Sigma^*_\Delta$ is a model of $\Delta$. 
\begin{proposition}\label{prop:model}
Given a database $\Delta =(E,R)$, $\Sigma^*_\Delta$ is a  minimal model of $\Delta$, with respect to set inclusion.
\end{proposition}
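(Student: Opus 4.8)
The plan is to prove the two required conditions separately: first that $\Sigma^*_\Delta$ is a model of $\Delta$ (that is, $E \subseteq \Sigma^*_\Delta$ and every instantiated rule is valid in it), and second that it is minimal with respect to set inclusion among all such models. For the first condition, the inclusion $E \subseteq \Sigma^*_\Delta$ is immediate since $\Sigma^0 = E$, the sequence $(\Sigma^i)$ is $\preceq_k$-increasing by Lemma~\ref{lem:basic}, and the extension of $\preceq_k$ generalizes set inclusion as noted after Definition~\ref{def:orderings}; I would confirm that each step only adds v-pairs or raises truth values via $\oplus$, so no pair of $E$ is ever lost. Here the fact that $inst(E,R)$ excludes rules whose head occurs in $E$ is essential, since it guarantees the rules can never overwrite or contradict a stored extension fact.

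The heart of the first condition is showing every $\rho \in inst(E,R)$ is valid in $\Sigma^*_\Delta$, i.e.\ $v_\Delta(\rho^\to)$ is designated. First I would use that $\Sigma^*_\Delta$ is the fixed point, so $\Sigma_\Delta(\Sigma^*_\Delta) = \Sigma^*_\Delta$. Fix $\rho$ with $head(\rho) = h$ (or $\neg h$) and let ${\tt v} = v_\Delta(body(\rho))$. If ${\tt v}$ is \emph{not} designated (i.e.\ ${\tt v} \in \{{\tt n}, {\tt f}\}$), then by the equivalence $\phi_1 \to \phi_2 \equiv \circ\phi_1 \vee \phi_2$ and the truth table of $\circ$, the value $\circ\,{\tt v}$ is ${\tt t}$, so $v_\Delta(\rho^\to)$ is ${\tt t}$ regardless of the head, hence designated. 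If ${\tt v}$ is designated, the construction of $\Gamma^E_\Delta$ contributes either $\langle h,{\tt t}\rangle$ or $\langle h,{\tt b}\rangle$ (for a positive head), and after the $\oplus$-integration in $\Sigma_\Delta$ the head fact receives a value ${\tt w}$ with ${\tt v} \preceq_k {\tt w}$ and ${\tt v} \preceq_t {\tt w}$, as anticipated in the discussion preceding Figure~\ref{fig:more-tables}. I would then verify directly from the truth table of $\to$ that $v_\Delta(body \to head)$ is designated whenever the head value dominates a designated body value in both orderings; the negative-head case is symmetric using $\neg{\tt t} = {\tt f}$, $\neg{\tt b} = {\tt b}$.

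For minimality, I would argue that $\Sigma^*_\Delta$ is the \emph{least} fixed point and that any model $M$ of $\Delta$ must satisfy $\Sigma^i \subseteq M$ for all $i$ by induction. The base case $\Sigma^0 = E \subseteq M$ is condition~1 of Definition~\ref{def:model}. For the inductive step, suppose $\Sigma^{n} \subseteq M$; I would show each v-pair added in passing to $\Sigma^{n+1}$ must already be present in $M$, because the relevant rule is valid in $M$ and its body is designated in $\Sigma^n$ (hence, by monotonicity and $\Sigma^n \preceq_k M$, designated in $M$), forcing the head to take a designated value in $M$. The delicate point is translating from $\subseteq$ to the way the $\oplus$-integration assigns the head value: I would need that if $M$ makes the body designated and is a model, then $M$'s value on the head is \emph{exactly} forced to contain the contributions that $\Sigma_\Delta$ computes, so that no strictly smaller model can omit the corresponding pair.

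The main obstacle I anticipate is precisely this minimality argument, specifically reconciling the $\oplus$-integration step with set-inclusion minimality. The subtlety is that validity of $\to$ under a designated body allows the head to be \emph{either} ${\tt t}$ or ${\tt b}$ in a model (as the excerpt explicitly notes there are two ways to make the implication valid), so a model $M$ could in principle assign ${\tt b}$ where $\Sigma^*_\Delta$ assigns ${\tt t}$, which would make $\Sigma^*_\Delta \subseteq M$ fail and jeopardize the claim that $\Sigma^*_\Delta$ is contained in \emph{every} model. I therefore expect the correct statement to be that $\Sigma^*_\Delta$ is \emph{a} minimal model rather than the least one, and the real work is to show that no model is \emph{strictly} below it: any $M \subsetneq \Sigma^*_\Delta$ would omit some generated pair $\langle\varphi,{\tt v}\rangle$, and I would derive a violated rule instance at $\varphi$ to reach a contradiction, carefully handling the case where $M$ assigns $\varphi$ a lower $\preceq_k$-value (including ${\tt n}$, i.e.\ omission) while some instantiated rule with head $\varphi$ has a designated body in $M$.
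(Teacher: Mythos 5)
Your proposal matches the paper's proof in both halves: the model property is obtained from the fixed-point identity $\Sigma_\Delta(\Sigma^*_\Delta)=\Sigma^*_\Delta$ (the paper phrases it as a contraposition, but it is the same argument), and your final diagnosis --- that $\Sigma^*_\Delta$ is only \emph{a} minimal model, so one must rule out models strictly below it by noting that an omitted pair forces $v_M(\varphi)={\tt n}$ against a rule instance whose body remains designated --- is precisely the paper's minimality argument. The one detail you leave implicit is the paper's device of choosing the \emph{least} stage $k$ at which the removed pairs first appear, which guarantees $\Sigma^{k-1}\subseteq M$ and hence, by $\preceq_k$-monotonicity, that the generating rule's body is still designated in $M$; without fixing that first stage, an arbitrarily chosen omitted pair does not directly yield a violated rule instance.
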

\begin{proof}We show that $\Sigma^*_\Delta$ is a model of $\Delta$ by contraposition, assuming that  $\Sigma^*_{\Delta} $ is not a model of $\Delta$. First, we have $E \subseteq \Sigma^0$ and then, as $E \preceq_k \Sigma^*_\Delta$ holds by monotonicity and as no instantiated rule can change the truth value of the facts involved in $E$, we have $E \subseteq \Sigma^*_\Delta$. Thus, assuming that  $\Sigma^*_{\Delta} $ is not a model of $\Delta$ implies that at least one rule $\rho$ of $inst(E,R)$  is not valid in $\Sigma^*_\Delta$. In this case, $head(\rho)$ is not valid, while  $body(\rho)$ is valid. Then, denoting $head(\rho)$ by $\varphi$ (respectively $\neg \varphi$), we have $v_{\Delta} (\varphi)={\tt n}$ or $v_{\Delta}(\varphi)={\tt f}$ (respectively $v_{\Delta}(\varphi)={\tt t}$) along with $v_{\Delta} (body(\rho))$ equal to ${\tt t}$ or ${\tt b}$. Consequently $\Sigma_{\Delta} (\Sigma^*_{\Delta} ) \ne \Sigma^*_{\Delta}$, which is not possible by Definition~\ref{def:sem-operator}. This part of the proof is thus complete.

\smallskip
To show the minimality of $\Sigma^*_\Delta$, we show that for every nonempty subset $\sigma$ of $\Sigma^*_\Delta$,  $S=\Sigma^*_\Delta \setminus \sigma$ cannot be a model of $\Delta$. To this end, assuming that $S$ is a model of $\Delta$, let $k$ be the least integer such that $\Sigma^{k-1} \cap \sigma =\emptyset$ and $\Sigma^{k} \cap \sigma \ne \emptyset$. We notice that $k$ exists such that $k>0$ because, since $S$ is a model of $\Delta$, it holds that $E \subseteq S$ and so, since $\Sigma^0=E$, we have  $\Sigma^{0} \cap \sigma =\emptyset$.

Let $\langle \varphi, {\tt v}\rangle$ be in $\Sigma^{k} \cap \sigma$ but not in $\Sigma^{k-1}$. In this case, $v_S(\varphi)={\tt n}$ and as above, there exists one rule $\rho$ in $inst(E,R)$  such that $head(\rho)$ is either $\varphi$ or $\neg\varphi$ and  in $\Sigma^{k-1}$, $head(\rho)$ is not valid, while $body(\rho)$ is valid. Since $\Sigma^{k-1}\subseteq S$,  we have $\Sigma^{k-1}\preceq_k S$ and so,  by monotonicity of the connectors involved in $body(\rho)$, $v_{\Sigma^{k-1}}(body(\rho)) \preceq_k v_S(body(\rho))$. As $body(\rho)$ is valid in $\Sigma^{k-1}$, so is it in $S$. Since $head(\rho)$ is not valid in $S$, $\rho$ is not valid in $S$ either. $S$ being assumed to be a model of $\rho$, we obtain  a contradiction, which completes the proof.
\end{proof}
It has been shown in \cite{Lau2019} that, even with conjunctive rules, $\Sigma^*_\Delta$  is not the only minimal model with respect to set inclusion, nor is it a minimal or a maximal model,  with respect to any of the  orderings $\preceq_k$ and $\preceq_t$. However, we also recall from \cite{Lau2019} that, with conjunctive rules whose heads are positive literals ({\em i.e.,} for Dalatog$^{neg}$ rules) all minimal models with respect to set inclusion share the same false facts and the same valid facts.

At this point, we would like to come back to  Proposition~\ref{prop:implication}, and make an important observation regarding the two closely related notions of implication and rule. We recall that the first item in that proposition is the following: 

\smallskip
$-$~$(\phi_1 \vee \phi_2) \to \phi_3 \equiv   (\phi_1 \oplus \phi_2) \to \phi_3 \equiv   (\phi_1 \to \phi_3)\wedge (\phi_2 \to \phi_3)$.

\smallskip\noindent
Now, consider the three implications as sets of instantiated rules: 

\smallskip
$R_1=\{\varphi \leftarrow \phi_1 \vee \phi_2\}$, $R_2=\{\varphi \leftarrow \phi_1 \oplus \phi_2\}$ and $R_3=\{\varphi \leftarrow \phi_1,\, \varphi \leftarrow \phi_2\}$

\smallskip\noindent
The important observation here is that when computing the corresponding semantics the results are {\em different}. In other words, the three sets of rules lead to different semantics, although the associated implications are {\em equivalent}.


We illustrate this important observation through the following example, in which we also compare our approach with that in \cite{Fitting91}.

\begin{figure}[t]
\begin{center}
{\footnotesize
\begin{tabular}{c|cccc}
$\Delta_1^{{\tt v}_1, {\tt v}_2}$&{\tt t}&{\tt b}&{\tt n}&{\tt f}\\
\hline
{\tt t}&{\tt t}&{\tt t}&{\tt t}&{\tt t}\\
{\tt b}&{\tt t}&{\tt b}&{\tt t}&{\tt b}\\
{\tt n}&{\tt t}&{\tt t}&{\tt n}&{\tt n}\\
{\tt f}&{\tt t}&{\tt b}&{\tt n}&{\tt n}\\
\end{tabular}
\qquad
\begin{tabular}{c|cccc}
$\Delta_2^{{\tt v}_1, {\tt v}_2}$&{\tt t}&{\tt b}&{\tt n}&{\tt f}\\
\hline
{\tt t}&{\tt t}&{\tt b}&{\tt t}&{\tt b}\\
{\tt b}&{\tt b}&{\tt b}&{\tt b}&{\tt b}\\
{\tt n}&{\tt t}&{\tt b}&{\tt n}&{\tt n}\\
{\tt f}&{\tt b}&{\tt b}&{\tt n}&{\tt n}\\
\end{tabular}
\qquad
\begin{tabular}{c|cccc}
$\Delta_3^{{\tt v}_1, {\tt v}_2}$&{\tt t}&{\tt b}&{\tt n}&{\tt f}\\
\hline
{\tt t}&{\tt t}&{\tt b}&{\tt t}&{\tt t}\\
{\tt b}&{\tt b}&{\tt b}&{\tt b}&{\tt b}\\
{\tt n}&{\tt t}&{\tt b}&{\tt n}&{\tt n}\\
{\tt f}&{\tt t}&{\tt b}&{\tt n}&{\tt n}\\
\end{tabular}
}
\caption{Computing the truth values of $S(a)$}
\label{fig:sem}
\end{center}
\end{figure}

\begin{example}\label{ex:implications}
Let $\Delta_1^{{\tt v}_1, {\tt v}_2}=(E^{{\tt v}_1, {\tt v}_2},R_1)$, $\Delta_2^{{\tt v}_1, {\tt v}_2}=(E^{{\tt v}_1, {\tt v}_2},R_2)$, $\Delta_3^{{\tt v}_1, {\tt v}_2}=(E^{{\tt v}_1, {\tt v}_2},R_3)$  be three families of databases  where ${\tt v}_1$ and ${\tt v}_2$ are truth values in $\{{\tt t},{\tt b},{\tt n}, {\tt f}\}$, $E^{{\tt v}_1, {\tt v}_2}$ is either $\{\langle P(a), {\tt v}_1\rangle$, $\langle Q(a), {\tt v}_2\rangle\}$ when ${\tt v}_1 \ne {\tt n}$ and ${\tt v}_1 \ne {\tt n}$, or $\{\langle P(a), {\tt v}_1\rangle\}$ when ${\tt v}_1\ne {\tt n}$ and ${\tt v}_2={\tt n}$, or $\{\langle Q(a), {\tt v}_2\rangle\}$  when ${\tt v}_1={\tt n}$ and ${\tt v}_2 \ne {\tt n}$, or $\emptyset$  when ${\tt v}_1={\tt v}_2={\tt n}$, and

\smallskip
$-$ $R_1 = \{S(a) \leftarrow P(a) \vee Q(a)\}$,

$-$ $R_2 = \{S(a) \leftarrow P(a) \oplus Q(a)\}$,

$-$ $R_3 = \{S(a) \leftarrow P(a) ,~S(a) \leftarrow Q(a)\}$.

\smallskip\noindent
We are thus considering  $3 \times 16=48$ databases whose semantics are defined by $E^{{\tt v}_1, {\tt v}_2} \cup\{\langle S(a), {\tt v}^{12}_i\rangle\}$ where for $i=1,2,3,$ ${\tt v}^{12}_i$ is the truth value obtained by applying $\Sigma_{\Delta_i^{{\tt v}_1, {\tt v}_2}}$ to the rule(s) in $R_i$ and the v-pairs in  $E^{{\tt v}_1, {\tt v}_2}$. The arrays displayed in Figure~\ref{fig:sem} show these truth values based on ${\tt v}_1$ (the rows  of the arrays) and ${\tt v}_2$ (the columns of the arrays). From left to right, the arrays correspond respectively to the three sets of rules $R_1$, $R_2$ and $R_3$.

For example, the value {\tt t} in row `{\tt b}' and column  `{\tt n}' of the array labelled $\Delta_1^{{\tt v}_1, {\tt v}_2}$  in Figure~\ref{fig:sem}, means that $\langle S(a), {\tt t}\rangle$ belongs to the semantics of $\Delta_1^{{\tt b}, {\tt n}}=(\{\langle P(a), {\tt b}\rangle\} , R_1)$, where $Q(a)$ has truth value ${\tt n}$.

It should be stressed that since all these arrays are pairwise distinct, all three sets $R_1$, $R_2$ and $R_3$ produce different semantics in some cases. As examples it can be seen from Figure~\ref{fig:sem} that:

\smallskip
$-$ for ${\tt v}_1={\tt t}$ and ${\tt v}_2={\tt b}$, $S(a)$ is true in $\Delta_1^{{\tt t}, {\tt b}}$ and in $\Delta_3^{{\tt t}, {\tt b}}$, but false in $\Delta_2^{{\tt t}, {\tt b}}$,

$-$ for ${\tt v}_1={\tt b}$ and ${\tt v}_2={\tt n}$, $S(a)$ is true in $\Delta_1^{{\tt b}, {\tt n}}$ and  false in $\Delta_2^{{\tt b}, {\tt n}}$ and in $\Delta_3^{{\tt b}, {\tt n}}$.

\smallskip\noindent
As a consequence, this implies that contrary to standard Datalog approaches, replacing the rule in $R_1$ by the two rules in $R_3$ has an impact on the database semantics in certain cases, although $R_1$ and $R_3$ yield the equivalent formulas as shown in Proposition~\ref{prop:implication}.
Therefore, the claim in \cite{Fitting91} whereby {\em `There is a standard way in Prolog to combine two program clauses for the same relation symbol,
using equality. Similar ideas carry over to languages based on a wide variety of bilattices\ldots'} does not hold in our approach. This also shows that rule based semantics do not always exactly `coincide' with the semantics of  implication. Consequently, the claim above is debatable even in the approach of \cite{Fitting91}, because no comparison is possible, as it makes no sense in \cite{Fitting91} that more than one rule head involves the same predicate. 
%
%

Referring to our running example, the previous statements show that replacing the rules $\rho_4: \neg Store(x) \leftarrow Humid(x)$ and $\rho_6:\neg Store(x) \leftarrow \neg White(x)$ by the rule $\rho_{46}: \neg Store(x) \leftarrow Humid(x) \vee  \neg White(x)$ would lead to different semantics. Indeed, when considering $\rho_4$ and $\rho_6$, the fact that $Humid(202)$ and $White(202)$ have respective truth values ${\tt t}$ and ${\tt b}$,  implies that $Store(202)$ has truth value ${\tt b}$. On the other hand, Figure~\ref{fig:sem} shows that when considering $\rho_{46}$, the same truth values for $Humid(202)$ and $White(202)$ imply that $Store(202)$ has truth value ${\tt f}$. \hfill$\Box$
\end{example}
\subsection{Safe Rules}
An important issue in rule based databases is that a database can have {\em infinite} semantics when ${\cal HB}$ is infinite. This point is indeed problematic because in such cases, answers to some queries can be infinite, which is not acceptable in practice.

As a simple case, consider  $\Delta=(E,R)$ where $E=\{\langle S(a), {\tt t}\rangle\}$ and  $R=\{P(x,y) \leftarrow Q(x,y) \vee S(x)\}$. Based on the truth table of  $\vee$ shown in Figure~\ref{fig:truth-tables-con}, for all $\alpha$ and $\beta$ in ${\cal U}$, $Q(\alpha, \beta) \vee S(\alpha)$ is true if so is $S(\alpha)$. Hence, $\Sigma^*_\Delta = \{\langle S(a), {\tt t}\rangle\} \cup \{\langle P(a,\beta), {\tt t}\rangle~|~\beta \in {\cal U}\}$, which is infinite when ${\cal U}$ is infinite. 

\smallskip
To cope with this difficulty, we define the notion of {\em safe} rules, inspired by the case of Datalog$^{neg}$ databases.
To see how the approaches are related regarding this issue, let ${\cal D} = (\{S(a)\}, \{P(x,y) \leftarrow \neg Q(x,y) \wedge S(x)\})$ be a Datalog$^{neg}$, whose  semantics is $\{S(a)\}\cup\{ P(a,\beta)~|~\beta \in {\cal U}\}$. This result is somehow similar to that for $\Delta$ above, and the rule in ${\cal D}$ is clearly not safe since the variable $y$ in $\neg Q(x,y)$ occurs in no positive literal in the body of the rule.


To formalize and characterize safe rules in our context, we need some preliminaries as detailed next. First, we adapt the notion of {\em active domain} in relational databases \cite{Ullman} to our approach as follows. Given a universe ${\cal U}$, its associated Herbrand base ${\cal HB}$ and a database $\Delta=(E,R)$ over ${\cal HB}$, we call the {\em active domain of $\Delta$}, denoted by ${\cal A}(\Delta)$,  the subset of ${\cal U}$ containing all the constants occurring in $\Delta$. Then the {\em active Herbrand base of $\Delta$}, denoted by ${\cal AB}(\Delta)$ is the set of all facts in ${\cal HB}$ that only  involve constants in ${\cal A}(\Delta)$. Notice that ${\cal A}(\Delta)$ and ${\cal AB}(\Delta)$ are finite sets, even if ${\cal U}$ is infinite, because $E$ and $R$ are assumed to be finite. The notion of safe rule is defined as follows.
\begin{definition}\label{def:safe-rule}
Given a Herbrand base ${\cal HB}$,  a rule  $\rho$ is said to be {\em safe} if for every database $\Delta = (E, \{\rho\})$ where $E$ is an arbitrary finite v-set involving facts in ${\cal HB}$, $\Sigma^*_\Delta$ is a subset of ${\cal AB}(\Delta)$.
\end{definition}
We first notice that, according to Definition~\ref{def:safe-rule}, allowing variables in the head of a rule not occurring in the body would generate non safe rules, and this explains why in Definition~\ref{def:database}, we have restricted all variables occurring in the heads of the rules to also occur in the bodies.
Indeed, let $\rho:P(x,y) \leftarrow Q(x)$ and $\Delta =( \{\langle Q(a), {\tt t}\rangle\}, \{\rho\})$. Then, we have ${\cal AB}(\Delta)=\{P(a,a), Q(a)\}$ and $\Sigma^*_\Delta = \{\langle Q(a), {\tt t}\rangle\} \cup \{\langle P(a, \beta), {\tt t}\rangle ~|~\beta \in {\cal U}\}$, showing that $\rho$ is not safe according to Definition~\ref{def:safe-rule}. Other examples  not relaxing the restriction in Definition~\ref{def:database} are presented next.
\begin{example}\label{ex:safe-rules}
The rule $\rho:P(x) \leftarrow P_1(x) \oplus P_2(x,y)$  is safe, according to Definition~\ref{def:safe-rule}. Indeed, if $inst$ is an instantiation of $x$ and $y$ such that $inst(body(\rho))$ is valid in $E$, then at least one of the instantiated atoms $P_1(\alpha_1)$ or $P_2(\alpha_2,\beta_2)$ is valid in $E$. Hence, these atoms can not generate a v-pair $P(\gamma)$ where $\gamma$ is different than $\alpha_1$ and $\alpha_2$.

Notice that the above reasoning does not hold for $\rho':P'(x,y) \leftarrow P_1(x) \vee P_2(x,y)$ because  for $\Delta=(\{ \langle P_1(a), {\tt t}\rangle\}, \{\rho'\})$, we have

$\Sigma^*_\Delta = \{\langle P_1(a), {\tt t}\rangle\} \cup \{\langle P'(a, \beta), {\tt t}\rangle ~|~\beta \in {\cal U}\}$, \\
showing that $\rho$ is not safe according to Definition~\ref{def:safe-rule}.\hfill$\Box$
\end{example}
In order to syntactically characterize safe rules, we adapt the usual notion of {\em disjunctive normal form} of a formula to the context of Four-valued logic. To this end, we recall from \cite{Fitting91,Tsoukias} the following standard properties of the connectors of the Four-valued logic:

\smallskip
$-$ $\neg(\phi_1 \vee \phi_2) \equiv \neg\phi_1 \wedge \neg \phi_2$ ; $\neg(\phi_1 \wedge \phi_2) \equiv \neg\phi_1 \vee \neg \phi_2$

$-$ $\neg(\phi_1 \oplus \phi_2) \equiv \neg\phi_1 \oplus \neg \phi_2$ ; $\neg(\phi_1 \otimes \phi_2) \equiv \neg\phi_1 \otimes \neg \phi_2$

$-$ Distributivity: for all distinct binary  connectors $\star$ and $\bullet$ in $\{\vee,\wedge, \oplus, \otimes\}$\\
\indent\indent
$\phi_1 \star ( \phi_2 \bullet \phi_3) \equiv (\phi_1 \star \phi_2) \bullet (\phi_1 \star \phi_3)$.

\smallskip\noindent
Using these properties, any quantifier free formula $\Phi$ can be transformed into its equivalent {\em $\vee\oplus$-normal form} according to the following steps:
\begin{enumerate}
\item
{\em $\vee$-transformation:} $\Phi \equiv \Phi_1 \vee \Phi_2 \vee \ldots \vee\Phi_n$ where for every $i$ in $\{1,2, \ldots , n\}$, $\Phi_i$ does not involve the connector $\vee$.
\item
{\em $\oplus$-tranformation:} For every $i$ in $\{1,2, \ldots , n\}$,  $\Phi_i$ is transformed into its equivalent $\oplus$-normal form $\phi_i^1 \oplus \phi_i^2 \oplus \ldots \oplus \phi_i^{p_i}$ where for $j$ in $\{1,2, \ldots , p_i\}$, $\phi_i^j$ does not involve the connector $\oplus$.
\item
{\em $\vee\oplus$-transformation:} Combining these previous two steps, we obtain:\\
$\Phi \equiv (\phi_1^1 \oplus \phi_1^2 \oplus \ldots \oplus \phi_1^{p_1}) \vee (\phi_2^1 \oplus \phi_2^2 \oplus \ldots \oplus \phi_2^{p_2}) \vee \ldots \vee(\phi_n^1 \oplus \phi_n^2 \oplus \ldots \oplus \phi_n^{p_n})$, where for every $i$ in $\{1,2, \ldots , n\}$ and every $j$ in $\{1,2, \ldots , p_i\}$, $\vee$ and $\oplus$ do not occcur in $\phi_i^j$.
\item
{\em $\wedge\otimes$-transformation:} As for every $i$ in $\{1,2, \ldots , n\}$ and every $j$ in $\{1,2, \ldots , p_i\}$, the only connectors occurring in $\phi_i^j$ are $\neg$, $\wedge$ and $\otimes$, the following equivalent form of $\phi_i^j$ can be computed by applying transformations similar to those above:\\
$\phi_i^j \equiv (\lambda_1^1 \otimes \lambda_1^2 \otimes \ldots \otimes \lambda_1^{r_1}) \wedge (\lambda_2^1 \otimes \lambda_2^2 \otimes \ldots \otimes \lambda_2^{r_2}) \wedge \ldots \wedge(\lambda_q^1 \otimes \lambda_q^2 \otimes \ldots \otimes \lambda_q^{r_q})$,\\
where for every $i$ in $\{1,2, \ldots , q\}$ and every $j$ in $\{1,2, \ldots , r_i\}$, $\lambda_i^j$ is a literal, that is of the form $\varphi$ or $\neg\varphi$ where $\varphi$ is in ${\cal HB}$.
\end{enumerate}
Combining these transformations yields a formula equivalent to $\Phi$, called the {\em $\vee\oplus$-normal form of $\Phi$}.
Based on the truth tables of Figure~\ref{fig:truth-tables-con}, given a formula $\Phi$ involving no variable, for every v-set $S$, $\Phi$ is valid in $S$ if and only if there exist $i_0$ in $\{1,2, \ldots , n\}$ and $j_0$ in $\{1,2, \ldots , p_{i_0}\}$ such that $\phi_{i_0}^{j_0}$ is valid in $S$.
Furthermore, assuming that $\phi_{i_0}^{j_0}$ is written as shown in the last item above, $\phi_{i_0}^{j_0}$ is valid in $S$ if and only if every literal $\lambda$ occurring in the $\wedge\otimes$-transformation of  $\phi_{i_0}^{j_0}$ is valid in $S$, that is $v_S(\lambda)$ is ${\tt t}$ or ${\tt b}$ if $\lambda=\varphi$, and $v_S(\lambda)$ is ${\tt f}$ or ${\tt b}$ if $\lambda=\neg\varphi$.

As a consequence, $\Phi$ is valid in $S$ if and only in the $\vee\oplus$-normal of $\Phi$, there exists a $\vee$- and $\oplus$-free sub-formula $\phi_{i_0}^{j_0}$ for which all involved literals are valid in $S$, and thus occur in $S$ with an appropriate truth value. Based on this important remark, the following proposition can be stated.
\begin{proposition}\label{prop:safe}
Let  $\rho:h(X) \leftarrow B(X,Y)$ be a rule such that $B(X,Y)$ is written in its $\vee\oplus$-normal form using the same notation as above. $\rho$ is safe if and only if for every $i$ in $\{1,2, \ldots , n\}$ and every $j$ in $\{1,2, \ldots , p_i\}$, the sub-formula $\phi_i^j$ involves at least all variables in $X$.
\end{proposition}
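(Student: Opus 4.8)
The plan is to prove the two implications separately, leaning on the characterization established immediately above the statement: a ground body $B$ written in $\vee\oplus$-normal form is valid in a v-set $S$ if and only if some $\oplus$-free conjunctive term $\phi_{i}^{j}$ is valid in $S$, and such a term is valid exactly when each of its literals $\lambda$ is valid, i.e. $v_S(\lambda)$ is designated. Everything hinges on one elementary observation, which I would isolate first: a literal $\lambda$ can be valid in $S$ only if its underlying atom $\varphi$ actually occurs in $S$. Indeed, if $\varphi$ does not occur in $S$ then $v_S(\varphi)={\tt n}$, and since $\neg{\tt n}={\tt n}$, neither $\varphi$ nor $\neg\varphi$ is designated. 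Consequently, validity of $\phi_i^j$ under an instantiation $inst$ forces every constant used to instantiate a variable occurring in $\phi_i^j$ to appear in some fact of $S$. I would stress that this is precisely where our OWA setting departs from the CWA of Datalog$^{neg}$: here even a \emph{negative} literal bounds its variables.

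For the ``if'' direction, assume every $\phi_i^j$ contains all the variables of $X$. Fixing an arbitrary finite v-set $E$ over ${\cal HB}$, I would show by induction on $k$ that every fact occurring in $\Sigma^k$ lies in ${\cal AB}(\Delta)$. The base case $\Sigma^0=E$ holds by definition of the active domain. For the step, any v-pair contributed to $\Sigma^{k+1}=\Sigma_\Delta(\Sigma^k)$ comes from an instantiation $inst$ of $\rho$ with $inst(body(\rho))$ valid in $\Sigma^k$; by the characterization some term $\phi_{i_0}^{j_0}$ is valid in $\Sigma^k$, so by the observation every variable of $\phi_{i_0}^{j_0}$—hence, by hypothesis, every variable of $X$—is instantiated by a constant occurring in $\Sigma^k$, thus in ${\cal A}(\Delta)$ by the induction hypothesis. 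Since $head(\rho)$ contains only variables from $X$ (together with constants of $\rho$, which already lie in ${\cal A}(\Delta)$), the generated fact lies in ${\cal AB}(\Delta)$. As ${\cal AB}(\Delta)$ is finite, the limit $\Sigma^*_\Delta$ is attained and contained in ${\cal AB}(\Delta)$, so $\rho$ is safe.

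For the ``only if'' direction, I would argue by contraposition and build a witnessing database. Assume some term $\phi_{i_0}^{j_0}$ omits a variable $x_0\in X$. Pick a constant $a$, instantiate every variable of $\phi_{i_0}^{j_0}$ to $a$, and define $E$ by inserting, for each literal of $\phi_{i_0}^{j_0}$, the v-pair making it valid ($\langle\varphi[a],{\tt t}\rangle$ for a positive literal $\varphi$, and $\langle\varphi[a],{\tt f}\rangle$ for a negative one). Then for any $\beta\in{\cal U}\setminus{\cal A}(\Delta)$, the instantiation sending the variables of $\phi_{i_0}^{j_0}$ to $a$, sending $x_0$ to $\beta$, and all remaining variables to $a$ makes $\phi_{i_0}^{j_0}$—and therefore $body(\rho)$—valid in $E$ (note that $\phi_{i_0}^{j_0}$ does not mention $x_0$, so its value is independent of $\beta$). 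Since $x_0\in X$ necessarily occurs in $head(\rho)$—the variables of $X$ are exactly those free in the head—the generated fact contains $\beta$, is not already in $E$, and so is produced by the operator and belongs to $\Sigma^*_\Delta$. As this fact is not in ${\cal AB}(\Delta)$, we get $\Sigma^*_\Delta\not\subseteq{\cal AB}(\Delta)$, showing $\rho$ is not safe.

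The main obstacle, and the step I would take most care over, is the validity-forces-occurrence observation applied to negative literals, since it is the exact point at which the OWA behaviour diverges from Datalog$^{neg}$ and makes the syntactic criterion come out the way it does. The remaining care points are routine but worth checking explicitly: in the necessity direction one must verify that making $\phi_{i_0}^{j_0}$ valid does not block rule application (it does not, because the generated head fact contains the fresh constant $\beta$ and hence lies outside $E$), and that a single valid $\oplus$-free term indeed suffices to make the whole $\vee\oplus$-normal body valid, which is exactly the characterization recalled at the start.
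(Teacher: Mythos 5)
Your proof follows essentially the same route as the paper's: the necessity direction builds a small witnessing database that validates one offending sub-formula $\phi_{i_0}^{j_0}$ and then lets the unconstrained head variable range over fresh constants, and the sufficiency direction uses the fact that a valid $\vee$- and $\oplus$-free term forces all of its atoms (hence all of its variables' instantiating constants) to occur in the v-set. Your explicit justification of that last point via $\neg{\tt n}={\tt n}$, and your stage-by-stage induction on $\Sigma^k$ for sufficiency, are welcome elaborations of steps the paper leaves terse, as is your remark that the generated head fact lies outside $E$ and so the $inst(E,R)$ restriction does not block the rule. The one place where you genuinely diverge from the paper, and where your construction is weaker, is the choice of truth values in the witnessing extension: you assign ${\tt t}$ to positive literals and ${\tt f}$ to negative ones, which is ill-defined (and unrepairable by an arbitrary choice) when the same atom occurs both positively and negatively in $\phi_{i_0}^{j_0}$ --- e.g.\ for a term of the form $\varphi\wedge\neg\varphi$ your $E$ would have to contain two conflicting v-pairs for $\varphi[a]$ and thus would not be a v-set, whereas such a term can still be validated (with value ${\tt b}$) and can still make the rule unsafe. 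The paper sidesteps this by assigning ${\tt b}$ uniformly to every atom of $inst(\phi_{i_0}^{j_0})$, which validates every literal regardless of polarity; substituting that single choice into your argument makes it fully correct.
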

\begin{proof}
Assume first  that there exist $i_0$ in $\{1,2, \ldots , n\}$ and $j_0$ in $\{1,2, \ldots , p_{i_0}\}$ such that $\phi_{i_0}^{j_0}$  does not involve all variables in $X$. We write $X$ as $X_1X_2$ to mean that the variables in $X_1$ occur in $\phi_{i_0}^{j_0}$ whereas those in $X_2$ do not. Let $inst$ be an instantiation of the variables in $X_1$ and $\Delta=(E, \{\rho\})$ where $E$ is the set of all v-pairs $\langle \varphi, {\tt  b}\rangle$ such that $\varphi$ occurs in $inst(\phi_{i_0}^{j_0})$. Then $inst(\phi_{i_0}^{j_0})$ is valid in $E$ and so, for every extension $inst^*$ of $inst$ to the variables in $X_2$ or in $Y$, $inst^*(body(\rho))$ is valid in $E$. Hence, $inst^*(h(X_1X_2))$ belongs to the semantics of $\Delta$, meaning that $\rho$ is not safe.

Conversely, if  for every $i$ in $\{1,2, \ldots , n\}$ and every $j$ in $\{1,2, \ldots , p_i\}$, the sub-formula $\phi_i^j$ involves at least all variables in $X$, whatever the valid sub-formula $\phi_{i_0}^{j_0}$ in $B(X,Y)$, the instantiation of the variables in $\phi_{i_0}^{j_0}$ assigns a value to every variable in $X$ implying that the fact involved in $inst(h(X))$ is in ${\cal AB}(\Delta)$. Thus, $\rho$ is safe, and the proof is complete.
\end{proof}
\section{Updates}\label{sec:updates}
We first would like to emphasize that our approach to updates follows the same policy as in our previous work on database updating \cite{Lau1997,Lau1998}, whereby priority is given to the latest updates with respect to the current database semantics. This means that updates are {\em always} taken into account and that their effect can not be overridden when computing the semantics. In this approach, such update persistency holds because instantiated rules whose heads involve a fact occurring in $E$, are not applied. This is made possible by restricting instantiated rules to belong to $inst(E,R)$.

\subsection{Standard Update Semantics}
Notice that, contrary to the traditional 2-valued models, in our approach, facts are stored associated with a truth value.  We emphasize in this respect that, in standard 2-valued approaches under CWA, inserting (respectively deleting) $\varphi$ should be understood as {\em take into account that $\varphi$ becomes {\em true} (respectively {\em false}) in the database}. On the other hand, in our Four-valued approach, an update should rather be seen as a {\em change in the truth value} of a given fact. Formally, updates are defined as follows.

\begin{definition}\label{def:update1}
Let $\Delta = (E, R)$ be a  database  and $\nu=\langle \varphi, {\tt v}\rangle$ a v-pair. The result of the {\em update defined by $\nu$ in $\Delta$} is the database $\Delta_\nu=(E_\nu ,R)$ where $E_\nu$ is defined as follows:

\smallskip
$-$ If $\nu=\langle \varphi, {\tt n}\rangle$ then $E_\nu = E \setminus \{\langle \varphi, v_E(\varphi)\rangle\}$  

$-$ Otherwise, $E_\nu=(E \setminus \{\langle \varphi, v_E(\varphi)\rangle\})\cup \{\nu \}.$
\end{definition}
In terms of truth value, an intuitive way to state Definition~\ref{def:update1} is the following:
\begin{itemize}
\item
If ${\tt v}={\tt n}$, the update requires to set the truth value of $\varphi$ to unknown, which amounts to remove from $E$ any v-pair involving $\varphi$, if any. This corresponds to deletions in standard approaches.
\item
Otherwise, if ${\tt v} \ne {\tt n}$, the update consists in replacing the v-pair in $E$ involving $\varphi$, if any, by the v-pair involved in the update, that is $\nu$.
\end{itemize}
\begin{example}\label{ex:updates1}
In the context of our running example, due to $\langle Store(202), {\tt b}\rangle$ in the database semantics, it is likely that the bag has to be tested again. Assuming that in this case the sensors output the following: $\langle H_1(202), {\tt t}\rangle$, $\langle H_2(202), {\tt t}\rangle$ and $\langle W_1(202), {\tt t}\rangle$, these new v-pairs are inserted and the conflicting ones are deleted, thus resulting in the following updated database extension:

\smallskip\noindent
$E'=\{\langle H_1(101), {\tt f}\rangle$, $\langle H_2(101), {\tt f}\rangle$,  $\langle W_1(101), {\tt t}\rangle$,
$\langle H_1(202), {\tt t}\rangle$,  $\langle H_2(202), {\tt t}\rangle$, \\
\indent\indent
$\langle W_1(202), {\tt t}\rangle$, $\langle W_2(202), {\tt t}\rangle$,
$\langle W_1(303), {\tt f}\rangle \}$.
\hfill$\Box$
\end{example}
%
%
\subsection{Other Possible Update Semantics}
In the context of data integration, traditional updates are not always appropriate. Indeed, suppose that   $\langle \varphi, {\tt t}\rangle$ has to be {\em integrated} in a given database $\Delta=(E, R)$ according to the following policy:
\begin{itemize}
\item
If $E$ contains no v-pair involving $\varphi$ ({\em i.e.,} $\varphi$ is unknown in $\Delta$), then the integration of $\langle \varphi, {\tt t}\rangle$ is processed by inserting the v-pair in $E$.
\item
If $E$ contains the v-pair $\langle \varphi, {\tt t}\rangle$, then  the integration of $\langle \varphi, {\tt t}\rangle$ requires no change.
\item
If $E$ contains the v-pair $\langle \varphi, {\tt f}\rangle$, then  the integration of $\langle \varphi, {\tt t}\rangle$ implies that $\varphi$ becomes {\em inconsistent} in $\Delta$, meaning that $\langle \varphi, {\tt t}\rangle$ should be changed to  $\langle \varphi, {\tt b}\rangle$.
\item
If $E$ contains the v-pair $\langle \varphi, {\tt b}\rangle$, then  the integration of $\langle \varphi, {\tt t}\rangle$ implies that $\varphi$ remains {\em inconsistent} in $\Delta$, meaning that no change is required.
\end{itemize}
The last two cases do {\em not} correspond to standard updates, because in the updated database, the truth value of $\varphi$ is not the one specified in the update. In fact, the truth value of $\varphi$ in the updated database is  defined by $v_E(\varphi) \oplus {\tt t}$. Generalizing this remark, we define  {\em integrative updates} as follows.
\begin{definition}\label{def:update2}
Let $\Delta = (E, R)$ be a database,  $\nu=\langle \varphi, {\tt v}\rangle$ a v-pair and $\diamond$ a well formed binary expression involving the connectors $\neg$, $\vee$, $\wedge$, $\oplus$ or $\otimes$. The {\em integrative update} on $\Delta$ defined by $(\nu, \diamond)$ results in the database $\Delta'=(E', R)$ where $E'$ is defined by:

\smallskip
$-$ If $({\tt v} \diamond \, v_E(\varphi )) = {\tt n}$, $E'=E \setminus\{\langle \varphi, v_E(\varphi )\rangle \}$

$-$ Otherwise, $E'=(E\setminus\{\langle \varphi, v_E(\varphi )\rangle \})\cup\{\langle \varphi, ({\tt v} \diamond \, v_E(\varphi ))\rangle\}$
\end{definition}
We illustrate and comment Definition~\ref{def:update2} below.
\begin{enumerate}
\item
As suggested earlier, standard data integration is expressed  by defining $\diamond$ as $\varphi_1\diamond \varphi_2 = \varphi_1 \oplus \varphi_2$.
\item
Considering the connector $\otimes$ instead of $\oplus$ suggests another kind of data integration: instead of cumulating the knowledge as done with $\oplus$, the result of integration can be seen as the `{\em common} knowledge'. For example, when it comes to integrate $\langle \varphi, {\tt t}\rangle$ in the presence of $\langle \varphi, {\tt f}\rangle$, the result is $\langle \varphi, {\tt n}\rangle$, meaning that $\varphi$ becomes unknown.
Moreover, the integration of $\langle \varphi, {\tt t}\rangle$ in the presence of $\langle \varphi, {\tt b}\rangle$, results in keeping the former v-pair while eliminating the latter. This way of integrating can be seen as a mean to eliminate cases of inconsistency.
\item
However, it could not be suitable to eliminate inconsistency, but on the contrary to preserve it. Namely, in the case above, {\em i.e.,} when integrating $\langle \varphi, {\tt t}\rangle$ in the presence of $\langle \varphi, {\tt b}\rangle$, it might be expected that $\langle \varphi, {\tt b}\rangle$ be kept. As shown in the right most table of Figure~\ref{fig:integration}, our approach allows to take this case into account by defining a connector $\odot$ as follows:\\
\centerline{$\varphi_1 \odot \varphi_2= (\varphi_1 \otimes \varphi_2) \oplus (\varphi_1 \otimes \neg \varphi_1) \oplus (\varphi_2 \otimes \neg \varphi_2).$}
\end{enumerate}
It should be emphasized from Definition~\ref{def:update2} that it is unlikely that {\em any} expression $\diamond$ makes sense for defining  an integration policy. We however notice that the last item above shows that some sophisticated expressions might be relevant.
\begin{figure}[t]
\begin{center}
{\footnotesize
\begin{tabular}{c|cccc}
$\varphi_1\otimes \varphi_2$&{\tt t}&{\tt b}&{\tt n}&{\tt f}\\
\hline
{\tt t}&{\tt t}&{\tt t}&{\tt n}&{\tt n}\\
{\tt b}&{\tt t}&{\tt b}&{\tt n}&{\tt f}\\
{\tt n}&{\tt n}&{\tt n}&{\tt n}&{\tt n}\\
{\tt f}&{\tt n}&{\tt f}&{\tt n}&{\tt f}\\
\end{tabular}
~$\bigoplus$
\begin{tabular}{c|c}
$\varphi_1 \otimes \neg\varphi_1$&\\
\hline
{\tt t}&{\tt n}\\
{\tt b}&{\tt b}\\
{\tt n}&{\tt n}\\
{\tt f}&{\tt n}\\
\end{tabular}
~$\bigoplus$
\begin{tabular}{c|c}
$\varphi_2 \otimes \neg\varphi_2$&\\
\hline
{\tt t}&{\tt n}\\
{\tt b}&{\tt b}\\
{\tt n}&{\tt n}\\
{\tt f}&{\tt n}\\
\end{tabular}

\vspace{.5cm}
\begin{tabular}{c|cccc}
$\varphi_1 \odot \varphi_2$&{\tt t}&{\tt b}&{\tt n}&{\tt f}\\
\hline
{\tt t}&{\tt t}&{\tt b}&{\tt n}&{\tt n}\\
{\tt b}&{\tt b}&{\tt b}&{\tt b}&{\tt b}\\
{\tt n}&{\tt n}&{\tt b}&{\tt n}&{\tt n}\\
{\tt f}&{\tt n}&{\tt b}&{\tt n}&{\tt f}\\
\end{tabular}
}
\caption{Computing the truth table of the expression $\odot$}
\label{fig:integration}
\end{center}
\end{figure}
\begin{example}\label{ex:updates2}
In the context of our running example, we assume that the sensor $H_2$ has been replaced with a new one of another type that allows for the additional answer ${\tt b}$ when the degree of humidity has not been determined properly. Notice that this type of output should be distinguished from the absence of answer that is understood as a failure. However, since the sensor is new, its output has to be carefully taken into account. This can be modeled by {\em integrating} the output of the new sensor with the current content of the database ({\em i.e.,} the output from the old sensor).

As explained above this integration can be done in many different ways, some of which being illustrated below, starting form the database extension $E'$ of Example~\ref{ex:updates1}, containing the v-pairs $\langle H_2(101), {\tt f}\rangle$ and  $\langle H_2(202), {\tt t}\rangle$. We also assume that the values returned by the new sensor are:  $\langle H_2(101), {\tt f}\rangle$,  $\langle H_2(202), {\tt b}\rangle$ and $\langle H_2(303), {\tt t}\rangle$. 

Integrating the  new values with the existing ones in the standard way using $\oplus$ would yield: $\langle H_2(101), {\tt f}\rangle$,  $\langle H_2(202), {\tt b}\rangle$ and $\langle H_2(303), {\tt t}\rangle$, meaning that the new values replace the current ones. However, a more conservative way of integrating the new values is to consider the connector $\otimes$ instead of $\oplus$, which would yield the following:  $\langle H_2(101), {\tt f}\rangle$ and  $\langle H_2(202), {\tt t}\rangle$, meaning that the inconsistency returned by the new sensor is not taken into account and that $H_2(303)$ remains unknown.

Although this result could be seen as more `conservative' than the first one in case of disagreement, it might seem counter-intuitive that the inconsistency is not taken into account. Considering the operator $\odot$ would  produce  $\langle H_2(101), {\tt f}\rangle$ and  $\langle H_2(202), {\tt b}\rangle$, meaning that the inconsistency is now taken into account  and that $H_2(303)$ remains unknown.\hfill$\Box$
\end{example}
We argue that integrative updates generalize standard updates, because any standard update can be expressed as an integrative update. Indeed, given a database $\Delta$ and a fact $\varphi$, the following holds:

\smallskip\noindent
$-$ The update defined by $\langle \varphi, {\tt t}\rangle$ is expressed by the integrative update $(\langle \varphi, {\tt t}\rangle, \vee)$.
\\
$-$ The update defined by $\langle \varphi, {\tt b}\rangle$ is expressed by the integrative update $(\langle \varphi, {\tt b}\rangle, \oplus)$.
\\
$-$ The update defined by $\langle \varphi, {\tt n}\rangle$ is expressed by the integrative update $(\langle \varphi, {\tt n}\rangle, \otimes)$.
\\
$-$ The update defined by $\langle \varphi, {\tt f}\rangle$ is expressed by the integrative update $(\langle \varphi, {\tt f}\rangle, \wedge)$.
\section{Related Work}\label{sec:rel-work}
Comparing our approach with all related work in the literature is simply not possible due to the huge amount of papers on these topics that have been published during the past four or five decades... In what follows, we mainly focus on the most related approaches dealing with $(i)$ logic and databases, $(ii)$ inconsistent databases, $(iii)$ multi-valued logic.

\smallskip\noindent
{\bf Logic and Databases.}
We first refer to \cite{CeriGT90,Ullman,MinkerSZ14} for surveys of standard approaches to Datalog databases, while in \cite{Bidoit91} the problem of negation is overviewed in more details. It is important to recall that in all these work, CWA  is assumed, thus leading to difficulties in handling falsity, a problem that does not arise in our framework,  which assumes OWA instead of CWA.

Changing from CWA to OWA is not new \cite{Bergman} and the need has appeared due to the emergence of data integration on the web. This is so because in this framework, when a piece of information has not been retrieved in the answer to a query, this cannot be seen as that this piece of information is {\em false}, but rather that this piece of information has not been searched properly. It is thus more appropriate that this piece of information be assigned the truth value {\em unknown}.

On the other hand, the examples in this paper suggest that when integrating information from several sources, contradictions may occur, thus motivating for the introduction of {\em inconsistent} as a truth value. This point of view has also been considered in \cite{AmoP07} but in a logical framework that differs from ours. Indeed, in \cite{AmoP07}, the underlying four valued logic is not the one in \cite{Belnap}, although the considered implication looks similar to FDE implication. Morevover, in \cite{AmoP07} the authors consider two negations in the context of CWA and propose an alternating strategy for computing the database semantics, inspired from the strategy in \cite{GelderRS91} with well-founded semantics.

The work in \cite{Fitting91} is much closer to our approach than that in \cite{AmoP07} because the underlying logic in \cite{Fitting91} is that in \cite{Belnap}. However, the reader is referred to the previous sections regarding some main differences between the approach in \cite{Fitting91} and ours. Among these differences, we mention the form of the rules and the semantic operator that in \cite{Fitting91} makes rule heads false when so is the body, whereas in our approach, the truth value is not changed. Related work following this policy of head assignment to false can be found in \cite{Grahne,Grahne-19} where, in the context of relational databases, reasoning with four truth values is modeled as reasoning {\em twice} under two truth values: once to deduce true information and once to deduce false information (inconsistency being information obtained in the two ways of reasoning). However, the context of the work in  \cite{Grahne,Grahne-19} differs from ours and that in \cite{Fitting91} because in \cite{Grahne,Grahne-19}, implications express equality-generating or tuple-generating-constraints instead of rules.

It is also important to recall that the issue of deductive database updating was first addressed in \cite{Reiter92}, and then by many other authors among which we cite \cite{Lau1998}, which was the first approach suggesting to store false facts and to give priority to most recent updates. The present work builds upon these basic ideas in a much wider context.

\smallskip\noindent
{\bf Inconsistent Databases.}
Regarding related work on inconsistent databases, we propose a radically different approach. Indeed, the purpose of previous work dealing with contradictions in databases, is either to define and investigate  `repairs' so as to make the database consistent (\cite{Afrati,Gianluigi}), and/or to identify a set of queries whose answer is independent from any contradiction (\cite{Greco}). Instead, we propose an approach in which inconsistent information can be stored or deduced through rules, and our purpose is not to eliminate or avoid contradictions.

Indeed, our semantics allows for handling inconsistent information as such, thus reflecting real world applications in which true, false, inconsistent and unknown information have to be dealt with, as is the case when data integration is involved. In doing so, we follow the position in  \cite{Gabbay}, in that inconsistent information should not be avoided, but  treated as such by taking appropriate actions when necessary. The issue of taking actions lies beyond the scope of this  paper, because our rules cannot express an information such as `{\em If $\varphi$ is inconsistent then $\phi$}'. Indeed in our formalism such a rule would be expressed as $\phi \leftarrow {\bf B}\varphi$, which is not allowed, but which is the subject of our current research.

The approach in \cite{LoyerSS04} addresses the issue of data inconsistency due to data integration according to  a specific scenario. In \cite{LoyerSS04}, the authors consider that the information consists of facts that a central server collects from autonomous sources and then tries to combine, using  rules that follow the syntax and the semantics of \cite{Fitting91}, and a set of {\em hypotheses} $H$, representing the server's own estimates. In this setting, the authors show how to compute what they call the {\em support of $H$}, defined as the maximal part of $H$ that does not contradict the facts in the database semantics. This notion of support has then been shown to provide hypothesis-based semantics for the class of programs defined in \cite{Fitting91}, and in the case of Datalog$^{neg}$ programs, these semantics have been shown to extend well-founded semantics of \cite{GelderRS91} and Kripke Kleen semantics of  \cite{Fitting85a}.

\smallskip\noindent
{\bf Multi-valued Logic.} The Four-valued logic that we consider in this work has been introduced in \cite{Belnap} and then has motivated many research efforts in the community of research in non standard logic. Again, our aim is not to review all these work, and we refer to \cite{OmoriW17} for a nice review of this topic. Here, we focus on those work that are the most closely related to ours and that have already been cited in many places. In \cite{Arieli1998} the issue of the functional completeness has been addressed among others and their result has of course inspired our concern on this issue, related to FDE implication. On the other hand, the bi-lattice structure of this logic has been widely studied in \cite{Fitting91}, where the concept of logic programs in this framework was first introduced. We recall that the semantics of the rules in \cite{Fitting91} is different from ours in that in \cite{Fitting91}, the head is set to false when the body is false, whereas in our approach, the truth value of the head is not changed in this case. We argue in this respect that our approach follows standard approaches in that implications whose body is not valid are valid, implying that truth values of the head have not to be changed.

More recently, in \cite{Tsoukias}, an implication slightly different than FDE implication (that we have formerly denoted by $\hookrightarrow$) has ben proposed, and a strong relationship between this logic and rough set theory has been established. We recall that it has been shown in \cite{Lau2019} that our approach works with this implication as well, although FDE implication has been chosen in the present paper.
\section{Conclusion}\label{sec:conclusion}
In this paper we have introduced a novel approach to deductive databases dealing with contradictory information. We stress again that this work is motivated by the facts that $(i)$ many contradictions occur in the real world and these contradictions must be dealt with as such, and $(ii)$ data integration is a field where such contradictions are common. To cope with this issue we consider a deductive database approach based on the Four-valued logic initially introduced in \cite{Belnap}. Our database semantics follows FDE implication and has been shown slightly different from that of \cite{Fitting91}. We also recall that in this paper, rules whose head is a {\em negative} literal are allowed and we have shown that contradicting rules could be safely taken into account in our context. Another important contribution of this work is to propose a new kind of update that allows to `combine' the expected truth value of a fact with its current truth value in the database. This updating policy is of particular interest when it comes to {\em integrate} new pieces of information in a given database.

Based on the results reported in this paper, we are investigating the following issues. First, as rules can contradict each other (a situation which frequently happens in real life), it is important to characterize the exact situations when these contradictions happen and if so, which actions have to be taken, as suggested in \cite{Gabbay}. We are investigating this important issue by extending the form of the rules to allow in their body additional connectors introduced \cite{Tsoukias} (such as connector ${\bf B}$ recalled in Section~\ref{sec:background}). Another important extension of this work is the investigation of an {\em algebraic} language that would allow for the definition of a {\em generic} framework and the expression of {\em constraints} on data such as functional dependencies or tuple generating dependencies. Last but not least, based on such an algebra, we strongly believe that the Four-valued framework provides an elegant and efficient tool for defining a new query language devoted to {\em data integration} rather than to data querying or updating. The notion of  {\em integrative updates} as defined in Section~\ref{sec:updates}, will be the starting point of this future work.

\nocite{*}

%
\end{document}